\title{Rate Splitting, Superposition Coding and Binning for Groupcasting over the Broadcast Channel: A General Framework}
\author{Henry Romero and Mahesh K. Varanasi
%\thanks{This work was supported in part by Qualcomm }
\thanks{This paper was presented in part at the 2017 IEEE Intnl. Symp. on Information Theory (ISIT), Aachen, Germany, 2017 \cite{romero2017rate}.}
\thanks{H. P. Romero was with the Electrical Computer and Energy Engineering Department at the University of Colorado at Boulder when this work was performed and is now with MIT Lincoln Laboratory, 244 Wood St., Lexington, MA, 02420
e-mail: (henry.romero@ll.mit.edu). }% <-this % stops a space
\thanks{M. K. Varanasi is with the Electrical Computer and Energy Engineering Department, 
University of Colorado, Boulder, CO 80303 USA e-mail: (varanasi@colorado.edu).}% <-this % stops a space
}
\begin{document}
\maketitle

%\tableofcontents

\newcommand{\ltwo}[1]{\ell^2(\mathbb{#1})}
\newcommand{\M}[1]{\mathbf{#1}}
%\mathbin{:}
\newcommand{\mL}[1]{[1{:}#1]}
\newcommand{\noindex}{\hspace*{-0.8em}}%
\newtheorem{theorem}{Theorem}
\newtheorem{lemma}{Lemma}
\newtheorem{proposition}{Proposition}
\newtheorem{corollary}{Corollary}
\newtheorem{definition}{Definition}
\newcommand{\upset}[1]{\uparrow{#1}}%
\newcommand{\downset}[1]{\downarrow{#1}}%
\newcommand{\labeledop}[2]{\,{\buildrel #1 \over #2 }\,}
\newcommand{\spfont}[1]{\mathsf{#1}} 
\newcommand{\markov}{\mathrel\multimap\joinrel\mathrel-\mspace{-9mu}\joinrel\mathrel-}

\begin{abstract}
A general inner bound is given for the discrete memoryless 
broadcast channel  with an arbitrary number of users  and  general message sets, a setting that accounts for the most general form of concurrent groupcasting, with messages intended for any set of subsets of receivers.  Achievability is based on  superposition 
coding and rate-splitting without and with binning, where each receiver jointly decodes both its desired messages as well as   
the partial interference assigned to it via rate-splitting.
The proof of achievability builds on the techniques for the 
description and analysis of superposition coding recently developed
by the authors for the 
multiple access channel with general messages as well as a new recursive mutual covering lemma for the analysis of the more general achievable scheme with binning.
\end{abstract}

\section{Introduction}
A fundamental feature of wireless transmission is its
broadcast nature. On the one hand, this feature can be 
seen as a detriment  insofar as it inhibits a receiver 
from decoding its desired messages when undesired 
messages interfere at that receiver. On the other hand, 
it facilitates the distribution of one message to many receivers, when all receivers desire that message. 

 A model suited for the analysis of the benefits
 and detractions of this broadcast nature of wireless transmission
 is the broadcast channel (BC) with \emph{general message sets}.
With general message sets, each distinct
message is {\em groupcasted} to a distinct group
of receivers. That group may contain only a single receiver, as is the case for a unicast message,
or as many as all receivers, as is the case for a multicast message,
or any range of intermediate options. %, all of which can be general be thought of as groupcast messages.
In general, our model of the broadcast channel with general messages permits multiple such messages to
be concurrently groupcasted.

Much of the work on broadcast channels
has focused on  multiple unicast (i.e., private messages), dating to the seminal
paper on two-user binary symmetric and scalar Gaussian 
 BCs by  Cover \cite{Cover:1972kc}. 
There, a coding strategy known as
\emph{superposition coding} was proposed.
Its extension to general BCs, with an
arbitrary number of users, was developed in
Bergmans \cite{Bergmans:1973ba}. This inner bound
is tight for the  BC with degraded receivers%
\footnote{That is, BCs for which the channel input, followed
by some ordering of the receivers, form a Markov chain.},
 as was established by Gallager \cite{Gallager:1974} in
the  discrete memoryless case, and by
Bergmans \cite{Bergmans:1974uf} in the scalar
Gaussian case. Rate-splitting was first proposed by Carleial in \cite{carleial1978interference} in the context of the two-user interference channel, a technique later used in two- and some three-user broadcast channels (as well as in other problems cf. \cite{el2011network}). Rate-splitting and superposition coding are combined with binning in \cite{liang2007rate} (see \cite[Theorem 8.4]{el2011network}) for the two-user broadcast channel with private and common messages. This idea is extended to the three-user broadcast channel with three degraded messages in \cite{Nair:2009is} (but with one receiver employing indirect joint decoding) and, more generally, to the diamond message set groupcasting in the $K$-user broadcast channel in \cite{salman2020diamond}. The mutual covering lemma (cf. \cite{el2011network}) suffices for the analyses of all these schemes. Moreover, in \cite{Nair:2009is} and \cite{salman2020diamond}, the most economical choice $\spfont{F}=\spfont{E}$ is made.
%and interference channels as well as in multiple description coding and interactive communication (cf. \cite{el2011network}). 

Here, we considerably generalize these prior inner bounds to the discrete memoryless (DM) BC with an arbitrary number of users and with arbitrary message sets.
Our first inner bound employs generalized notions of superposition coding, rate-splitting, and the joint decoding of  desired messages along with partial decoding of undesired messages. Our second, more general inner bound combines these notions with binning (referred to as multicoding and joint typicality codebook generation in \cite{el2011network}).

To characterize the rates achievable
by superposition coding, we 
use the order-theoretic framework
developed by the authors for the multiple-access channel (MAC) with
general message sets \cite{romero2017unifying}.
 In so doing, we succinctly characterize
 the rates achievable by superposition coding,
and provide a connection to polyhedral combinatorics.

%In the conference version of this paper \cite{romero2017rate} a general framework for the general discrete memoryless (DM) BC was given for combining and analyzing rate-splitting and superposition coding under a general notion of superposition coding introduced by the authors in the context of the multiple-access channel with general message sets in \cite{romero2017unifying} as well as a a general notion of up-set rate splitting also introduced therein. Here, we present that general framework without and with binning  
%Finally, rate-splitting and superposition coding can be combined with binning (referred to as multicoding and joint typicality codebook generation in \cite{el2011network}) 

%as is done in \cite{liang2007rate} (see \cite[Theorem 8.4]{el2011network}) for the two-user broadcast channel with private and common messages. This idea is extended to the three-user broadcast channel with three degraded messages in \cite{Nair:2009is} (but with one receiver employing indirect joint decoding) and, more generally, to the diamond message set groupcasting in the $K$-user broadcast channel in \cite{salman2020diamond}. The mutual covering lemma (cf. \cite{el2011network}) suffices for the analyses of all these schemes. Moreover, direct descriptions of the rate regions are given as unions of polytopes in the space of the rates of the messages.

In some cases, it may be beneficial
to decode interference. To allow for
both treating interference as noise
and fully decoding interference, and
a range of intermediate options,
we consider partial interference decoding in a very general way.
This is described by rate-splitting prior to superposition coding, which splits messages into sub-messages (in one of may ways), and relabels each sub-message as being intended by its original intended receivers and by  additional receivers.

Geometrically, this inner bound is equal
to the non-negative rates which lie within
the Minkowski sum of a polytope, representing
the rates achievable through superposition coding 
and a rate-splitting cone of vectors, representing
the enlargements achievable through partial
interference decoding. The polytopes have
 combinatorial structure as they are the intersection of
$K$ unbounded polyhedra, whose bounded
component represents a polymatroid.

Next, our achievable scheme based on rate-splitting and superposition coding is combined with binning. This more general scheme requires for its analysis a generalization of the mutual covering lemma of \cite{el1981proof} (see also \cite[Lemma 8.1]{el2011network}) which we prove here and call the recursive mutual covering lemma. This lemma helps succinctly characterize the conditions under which the probability of encoding errors due to unavailability of jointly typical codewords at the encoder can be made vanishingly small. Here too, we provide a connection to polyhedral combinatorics.

\section{Preliminaries}
\subsection{The Discrete Memoryless Broadcast Channel }
The DM BC consists of 
one transmitter $X\in\mathcal{X}$, $K$ receivers $Y_i\in\mathcal{Y}_i$,
for $1\leq i\leq K$, and a transition function  
$W(y_1,\ldots,y_K|x)$. If $X_t,Y_{1t},\ldots,Y_{Kt}$ are the channel input 
and output at the $t$th channel use, then the conditional probability of 
a sequential block of $n$ channel outputs, conditioned on the 
corresponding $n$ channel inputs, factors as  
$p(y_1^n,\ldots,y_K^n|x^n)= \prod_{t=1}^nW(y_{1t},\ldots,y_{Kt}|x_t)$.
The transmitter may send multiple independent
messages,  each of which is intended for  
a group of receivers.  Each independent message $M_S$, 
and its rate $R_S$, are indexed by the subset $S\subseteq \mL{K}$% 
\footnote{For any positive integer $M>0$, 
we denote $\{1,\ldots,M\}$ by $\mL{M}$.}
of the set of receivers that it is intended for.

We collect the indices of all the messages into the set $\spfont{E}$, 
the message index set.  As $\spfont{E}$ contains
non-empty subsets of $\mL{K}$, it is a subset of 
$2^{\mL{K}}$, the power set of $\mL{K}$.
Each receiver only desires a subset of all the messages sent;
we denote the set of indices of messages desired by the $j$th receiver as
\begin{equation}
	\spfont{W}_j^{\spfont{E}} = \{S: j\in S\in\spfont{E}\}  \label{eqn:desired_messages_at_receiver}
\end{equation}

\subsection{Notation}
Let $\spfont{P}$ be an ordered set of sets. If the
order on $\spfont{P}$ satisfies $S \leq S'$ only when $S\subseteq S'$,
so that $S$ and $S'$ are incomparable if neither $S\subseteq S'$
nor $S'\subseteq S$, then we call that order a \emph{superposition order}.
A subset $\spfont{B}$ of $\spfont{P}$ is a \emph{down-set} if, 
for every $S\in \spfont{B}$ and $S'\leq S$, $S'\in \spfont{B}$.
Similarly, $\spfont{B}$ is an up-set if, for every $S\in \spfont{B}$
 and $S\leq S'$, $S'\in \spfont{B}$. Let $\downset{\spfont{Q}}$ 
 (resp., $\upset{\spfont{Q}}$) 
 be the smallest down-set (resp., up-set) of $\spfont{P}$ 
 containing $\spfont{Q}$ \cite{DaveyPriestley:2002}.
 Let $\mathcal{F}_{\downarrow}(\spfont{P})$ 
 be the set of all down-sets of $\spfont{P}$, which is closed under 
 intersections and unions, and is referred to as the 
 \emph{down-set lattice}.%
\footnote{The set of down-sets is a lattice whose join and meet
are given by  union and intersection, respectively, as
the set of downs-set is closed under union and intersection.}
If $\mathcal{I}$ contains the indices for a collection of random 
variables $(A_i:i\in\mathcal{I})$, we succinctly denote
any subcollection $(A_i:i\in\mathcal{J})$, with
$\mathcal{J}\subseteq\mathcal{I}$, of those random variables 
as $A_{\mathcal{J}}$. 

In the context of examples, albeit with an abuse of notation, 
we denote the index of a message or its rate by a string of 
the elements of that set. For example, the 
message $M_{\{1,2,4\}}$ and its rate $R_{\{1,2,4\}}$ are 
denoted as $M_{124}$ and $R_{124}$, respectively. 
Similarly, the message index set 
$\spfont{E}=\{\{1\}, \{2\}, \{1,2\}, \{1,2,3\}\}$ will be 
succinctly denoted by $\spfont{E} = \{1,2,12,123\}$.

\section{Superposition-Based Inner Bound}

Consider an arbitrary $K$-receiver DM BC,
whose channel input takes values in $\mathcal{X}$
and whose message index set is $\spfont{E}$.
For any superset $\spfont{F}$ of $\spfont{E}$
within $\spfont{P} = 2^{\mL{K}}$, let   $X' = (X,U_{\spfont{F}}, Q)$
refer to a collection of random variables which
includes the channel input $X$, 
$|\spfont{F}|$ auxiliary random variables 
$U_{\spfont{F}} = (U_S:S\in\spfont{F})$,
and a coded time-sharing random variable $Q$.
Let $\spfont{F}$ be ordered by a
superposition order (denoted as $\leq$).

\begin{definition}\label{def:superposition_admissibleRVs}
$X' = (X,U_{\spfont{F}}, Q)$  is
\emph{superposition-admitting}  if 
$X$ is a deterministic function of $(U_{\spfont{F}},Q)$,
whose joint probability mass function factors as
\begin{equation}	
	p(U_{\spfont{F}},Q) = p(Q)\prod\nolimits_{S\in\spfont{F}}p(U_S|U_{\upset{\{S\}}\backslash \{S\}},Q) 	
	\label{eqn:pmf_factorization}
 \end{equation}
Let $\mathcal{A}_{(\leq)}^{\spfont{F}}$ contain all 
superposition-admitting random variables with respect to the ordered
set  $\spfont{F}$.
\end{definition}

For any set of random variables within $X'\in \mathcal{A}_{(\leq)}^{\spfont{F}}$,
define%
\footnote{For a set $E$, we use the notation $\mathbb{R}^E$ to refer 
to a vector of real numbers, where the components are labeled
 by the elements of $E$. Thus, if $E$ has $M$ elements, this
 space can be identified with $\mathbb{R}^M$. Replacing
 $\mathbb{R}$ with  $\mathbb{R}_+$  denotes the positive orthant
 of the previously described spaces.} 
\begin{IEEEeqnarray}{l}
	\mathcal{P}_{\downarrow}^{(j)}(X';\spfont{F}) =  \big\{	R \in \mathbb{R}_+^{\spfont{F}}: \IEEEnonumber \\
		\quad \sum_{S\in \spfont{B}} R_{S} \leq I(U_\spfont{B};Y_j|U_{\spfont{W}_j^{\spfont{F}}\backslash \spfont{B}},Q) 
				\text{ for all } \spfont{B}\in \mathcal{F}_{\downarrow}(\spfont{W}_j^{\spfont{F}})	\big\}
				 \IEEEeqnarraynumspace\label{eqn:downsetpolymatroid}
\end{IEEEeqnarray}
a subset of $\mathbb{R}_+^{\spfont{F}}$. Note that this polyhedron 
imposes no constraints on the nonnegative rates
$(R_S:S\in\spfont{F}\backslash \spfont{W}_j^{\spfont{F}})$.
Our principal result is the following inner bound.
\begin{theorem}\label{thm:superposition_DMBC_IB}
For the $K$-receiver DM BC with general message sets, 
the non-negative rates $(R_S:S\in \spfont{E})$ are achievable if,
for a message index superset $\spfont{F}$ with $\spfont{E}\subseteq \spfont{F}$
equipped with a superposition order, there
 exist non-negative split-rates $(r_{S\to S'}: S\in\spfont{E},S'\in \spfont{F}, S\subseteq S')$,
where the desired rates satisfy
\begin{equation} R_S = \sum\nolimits_{S\subseteq S': S'\in \spfont{F}} r_{S\to S'}	\quad \text{ for all }S\in \spfont{E}
		\label{eqn:targetrates_BC}\end{equation}
while the reconstructed rates 
\begin{equation} 	
	\hat{R}_{S'} = \sum\nolimits_{S\subseteq S': S\in \spfont{E}}r_{S\to S'}	\quad \text{ for all }S'\in \spfont{F} 
	\label{eqn:reconstructedrates_BC}\end{equation}
are constrained to be within the rate region
\begin{IEEEeqnarray}{c}
		\bigcup_{X'\in \mathcal{A}_{(\leq)}^{\spfont{F}}} 
			\left(\mathcal{P}_{\downarrow}^{(1)}(X';\spfont{F})\cap
		 \cdots \cap \mathcal{P}_{\downarrow}^{(K)}(X';\spfont{F})\right)  \label{eqn:packing_constraint_BC} \\
		\IEEEnonumber
\end{IEEEeqnarray}
Furthermore, the projection of each polyhedron $ \mathcal{P}_{\downarrow}^{(j)}(X';\spfont{F}) $ onto the cone of non-negative rates with indices in $\spfont{W}_j^{\spfont{F}}$ is a polymatroid.
\end{theorem}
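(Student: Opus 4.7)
The plan is to adapt the order-theoretic superposition coding framework developed by the authors for the MAC \cite{romero2017unifying} to each receiver of the BC, combine it with classical rate-splitting, and deduce the polymatroid claim from standard submodularity of conditional entropy.

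\textbf{Rate-splitting and codebook generation.} First, I would split each $M_S$, $S\in\spfont{E}$, into independent sub-messages $m_{S\to S'}$ of rate $r_{S\to S'}$ for every $S'\in\spfont{F}$ with $S\subseteq S'$. Regrouping by target yields a reconstructed message $\hat m_{S'}$ of rate $\hat R_{S'}$ for each $S'\in\spfont{F}$. Random codebooks $(u_S^n:S\in\spfont{F})$ are then generated in any linear extension of the superposition order from top to bottom: for each $S$, generate $2^{n\hat R_S}$ sequences $u_S^n$ i.i.d.\ conditionally on the already-generated ancestor codewords $u_{S'}^n$ for $S'\in\upset{\{S\}}\setminus\{S\}$, faithful to the factorization \eqref{eqn:pmf_factorization}. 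The channel input $x^n$ is taken as the symbol-wise deterministic function of $(u_\spfont{F}^n,q^n)$.

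\textbf{Joint decoding and error analysis.} Receiver $j$ jointly typicality-decodes $(u_S^n:S\in\spfont{W}_j^{\spfont{F}})$. The key structural observation is that if $u_S^n$ is declared incorrectly, then every $u_{S'}^n$ with $S'\leq S$ must also be incorrect, because the descendant codeword was drawn conditionally on the (wrong) ancestor. Hence the set $\spfont{B}$ of mis-decoded indices at receiver $j$ is necessarily a down-set of $\spfont{W}_j^{\spfont{F}}$. A generalized packing-lemma argument in the style of \cite{romero2017unifying} bounds the probability of the error event indexed by $\spfont{B}$ by approximately $2^{n(\sum_{S\in\spfont{B}}\hat R_S - I(U_\spfont{B};Y_j|U_{\spfont{W}_j^{\spfont{F}}\setminus\spfont{B}},Q))}$. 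A union bound over all $\spfont{B}\in\mathcal{F}_{\downarrow}(\spfont{W}_j^{\spfont{F}})$ and over $j$ then yields the packing constraint \eqref{eqn:packing_constraint_BC} on the reconstructed rates, which together with \eqref{eqn:targetrates_BC}--\eqref{eqn:reconstructedrates_BC} delivers achievability of any $(R_S:S\in\spfont{E})$ admitting a compatible split.

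\textbf{Polymatroid claim.} For the final assertion, I would define $f(\spfont{B}):=I(U_\spfont{B};Y_j|U_{\spfont{W}_j^{\spfont{F}}\setminus\spfont{B}},Q)$ on $\mathcal{F}_{\downarrow}(\spfont{W}_j^{\spfont{F}})$. Normalization $f(\emptyset)=0$ is immediate; monotonicity follows because enlarging $\spfont{B}$ shrinks the conditioning set and hence can only raise $H(Y_j|U_{\spfont{W}_j^{\spfont{F}}\setminus\spfont{B}},Q)$; and submodularity of $f$ on the down-set lattice reduces to the supermodularity of conditional entropy in its conditioning argument (equivalently, the non-negativity of conditional mutual information). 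A standard polyhedral-combinatorics argument then identifies the polyhedron cut out by the down-set inequalities with the polymatroid associated to the Dilworth truncation of $f$ to all of $2^{\spfont{W}_j^{\spfont{F}}}$.

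\textbf{Main obstacle.} The principal difficulty is the error-event bookkeeping: for each down-set $\spfont{B}$, one must correctly identify which codewords still follow the intended conditional law under the event and which are effectively resampled, so that precisely $I(U_\spfont{B};Y_j|U_{\spfont{W}_j^{\spfont{F}}\setminus\spfont{B}},Q)$ emerges from the typicality argument. The MAC framework of \cite{romero2017unifying} supplies this machinery, but invoking it at each receiver of the BC requires restricting $\spfont{F}$ to $\spfont{W}_j^{\spfont{F}}$ and verifying that the sub-messages indexed by $\spfont{F}\setminus\spfont{W}_j^{\spfont{F}}$, which are undesired at receiver $j$, contribute no constraints at that receiver.
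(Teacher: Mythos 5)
Your achievability argument is essentially the paper's own: the same superposition codebook generation along a linear extension of the order, joint typicality decoding of $\spfont{W}_j^{\spfont{F}}$ at receiver $j$, error events organized by down-sets with per-event exponent $I(U_{\spfont{B}};Y_j|U_{\spfont{W}_j^{\spfont{F}}\setminus\spfont{B}},Q)$, a union bound, and then rate-splitting layered on top exactly as in \eqref{eqn:targetrates_BC}--\eqref{eqn:reconstructedrates_BC}. One wording caveat: the set of \emph{message indices} decoded in error need not be a down-set (a receiver can err on an ancestor index while guessing the correct descendant index); what is a down-set is the set of \emph{codewords} that differ from the transmitted ones, since a descendant codeword is indexed by its ancestors' messages. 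Your multiplicity count $2^{n\sum_{S\in\spfont{B}}\hat R_S}$ over the down-set $\spfont{B}$ of wrong codewords absorbs this, so the bookkeeping still yields precisely the down-set inequalities of \eqref{eqn:downsetpolymatroid}; the paper organizes the same computation by the exact set of wrong indices and then removes redundant inequalities via $\downset{\spfont{B}}$.

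The genuine gap is in your justification of the polymatroid claim. The paper obtains it by citing Theorem 3 of \cite{romero2017unifying}; your direct argument misidentifies the needed property. Writing $f(\spfont{B})=I(U_{\spfont{B}};Y_j|U_{\spfont{W}_j^{\spfont{F}}\setminus\spfont{B}},Q)=H(Y_j|U_{\spfont{W}_j^{\spfont{F}}\setminus\spfont{B}},Q)-H(Y_j|U_{\spfont{W}_j^{\spfont{F}}},Q)$, submodularity of $f$ in $\spfont{B}$ is equivalent (via the complementation $\spfont{A}=\spfont{W}_j^{\spfont{F}}\setminus\spfont{B}$) to \emph{submodularity} of $\spfont{A}\mapsto H(Y_j|U_{\spfont{A}},Q)$, i.e., \emph{supermodularity} of $\spfont{A}\mapsto I(U_{\spfont{A}};Y_j|Q)$ --- not the supermodularity of conditional entropy you invoke, and in no case a consequence of mere non-negativity of conditional mutual information. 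For arbitrarily correlated auxiliaries this supermodularity is false (e.g., $U_1=U_2=Y_j$ gives $I(U_1,U_2;Y_j)+0<I(U_1;Y_j)+I(U_2;Y_j)$); it holds here only because the superposition factorization \eqref{eqn:pmf_factorization} forces the requisite conditional independences when $\spfont{B}_1,\spfont{B}_2$ range over down-sets (equivalently, $\spfont{A}$ over up-sets), which is exactly what the cited MAC result establishes. Likewise, the passage from the down-set lattice to a rank function on all of $2^{\spfont{W}_j^{\spfont{F}}}$ is not a Dilworth truncation; the relevant extension is $\spfont{B}\mapsto f(\downset{\spfont{B}})$, the same object that appears in the redundancy-removal step of the error analysis, and one must verify that this extension is normalized, monotone and submodular on the full Boolean lattice. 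As written, your sketch of this final assertion would not go through without importing that structural argument.
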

\begin{proof}
We sketch the achievability proof in two parts.  First, we show that 
a set of rates $(\hat{R}_{S'}:S'\in\spfont{F})$ are achievable for
the enlarged message index set $\spfont{F}$ if they
are within the rate region \eqref{eqn:packing_constraint_BC}.
This follows from superposition encoding 
at the single transmitter, where we use the order-theoretic framework we developed in \cite{romero2017unifying}  to analyze the 
probability of error at each receiver.
Details for this argument are provided in the Appendix. %\ref{sec:achievabilityproof}.
 
 Next,  we allow for partially decoding interference
  through rate-splitting. Divide each message $M_S$
into a collection of split-messages
 $(m_{S\to S'}:S\in\spfont{E},S'\in\spfont{F}, S\subseteq S')$.
 In this collection, the partial message $m_{S\to S'}$ 
 is to be treated as though it were intended for all
  messages within the set $S'$
rather than only for the messages within the set $S$.
Whenever the receiver decodes a partial message $m_{S\to S'}$,
with $S$ as a strict subset of $S'$, that receiver is partially decodes the interfering
message $m_S$.
% hence, this rate-split is a means of formalizing partial interference decoding.
With  $r_{S\to S'}$ as the rate of the split message $m_{S\to S'}$,
this rate-split decomposes
the target rate $R_S$ according to \eqref{eqn:targetrates_BC}.
Each receiver within the group $S' \in\spfont{F}$ must decode the
reconstructed message 
  $\hat{M}_{S'} = (m_{S\to S'}: S\in\spfont{E} : S\subseteq S')$,
  whose rate $\hat{R}_S$ is given by \eqref{eqn:reconstructedrates_BC}. 
  These reconstructed rates 
in turn can be reliably transmitted to their desired receivers if they satisfy
\eqref{eqn:packing_constraint_BC}, as previously mentioned. 
%by the previously mentioned superposition encoding scheme.

Finally, the fact that the projection of $ \mathcal{P}_{\downarrow}^{(j)}(X';\spfont{F}) $
 onto the cone of non-negative rates with indices in $\spfont{W}_j^{\spfont{F}}$ is a polymatroid
follows from Theorem 3 of \cite{romero2017unifying}. % (the details are omitted and will be provided in the journal version of this paper).
\end{proof}

\subsection{Inner bound of Theorem \ref{thm:superposition_DMBC_IB} as a Minkowski sum}
To delineate the two strategies that comprise 
Theorem \ref{thm:superposition_DMBC_IB}, and to better
understand the geometric structure of the associated
inner bound, we express that inner bound
as the Minkowski%
\footnote{For any two subsets $\mathcal{A}$ and $\mathcal{B}$ of 
$\mathbb{R}^{\spfont{F}}$,  
$\mathcal{A} + \mathcal{B} = \{a + b : a\in \mathcal{A}, b\in \mathcal{B}\}$ 
is the Minkowski sum of $\mathcal{A}$ and $\mathcal{B}$. }
 sum of a polytope, representing the rates achievable
 through superposition coding, and a cone,
representing the rate gains possible through
partial interference decoding.

Observe that if $R_{\spfont{E}}$ is in the inner bound  of Theorem 
\ref{thm:superposition_DMBC_IB}, then for each $S\in\spfont{E}$,
\begin{IEEEeqnarray*}{rl}
	R_S &= r_{S\to S} + \sum\nolimits_{S'\in\spfont{F}:S'\supset S} r_{S\to S'} \\
		&= \left(\hat{R}_S - \sum\nolimits_{S'\in\spfont{E}: S'\subset S}r_{S'\to S}\right) 
			+ \sum\nolimits_{S'\in\spfont{F}:S'\supset S} r_{S\to S'}  \\
		&= \hat{R}_S  + \Delta_S \IEEEyesnumber \label{eqn:ratedecomposition}
\end{IEEEeqnarray*}
where 
\begin{equation}
	\Delta_S = \sum\nolimits_{S'\in\spfont{F}: S'\supset S} r_{S\to S'} 
	- \sum\nolimits_{S'\in\spfont{E}: S'\subset S} r_{S'\to S}	\label{eqn:exchangerates1}
\end{equation}
These equations reveal that the rate-splitting  effectively 
exchanges groupcasting rates between different groupcast labels.
 As some of these exchanges occur not just
 among the rates indexed by the message index set $\spfont{E}$,
 but among the rates within the enlarged message index
 set $\spfont{F}$, \eqref{eqn:ratedecomposition} is incomplete.
 A complete description embeds the original set of rate demands, 
 which live in $\mathbb{R}_+^\spfont{E}$, into $\mathbb{R}_+^F$,
 by setting $R_S = 0$ for each $S\in\spfont{F}\backslash \spfont{E}$. 
 For each $S\in \spfont{F}$, the analogous statement to
 \eqref{eqn:ratedecomposition} is that $R_S = \hat{R}_S  + \Delta_S $, where
\begin{equation}
	\Delta_S = - \sum\nolimits_{S'\in\spfont{E}: S'\subset S} r_{S'\to S}\label{eqn:exchangerates2}
\end{equation}

To translate these observations into a geometrical
characterization of the inner bound in Theorem  \ref{thm:superposition_DMBC_IB},
we introduce a few additional definitions. For each
pair of distinct message indices $S\in\spfont{E},S'\in\spfont{F}$ with 
$S\subset S'$, let the vector $e_{S\to S'}$ be the vector in 
$\mathbb{R}_+^{\spfont{F}}$ for which
\begin{equation}	(e_{S\to S'})_A = \begin{cases}
				1 & \text{ if }A= S \\
				-1 & \text{ if }A = S'\\ 
				0 & \text{ else}	\end{cases}	\label{eqn:generating_vectors}	\end{equation}
Let $\mathcal{C}_{\uparrow}^{\spfont{F}}$ denote the cone of vectors  
generated by the collection of vectors 
$\{e_{S\to S'} : S\in\spfont{E},S'\in\spfont{F}\text{ and } S\subset S' \}$.
Then the vector $\Delta = (\Delta_S:S\in\spfont{F})$, whose elements
are defined by \eqref{eqn:exchangerates1}, if $S\in\spfont{E}$,
or by \eqref{eqn:exchangerates2}, if $S\in\spfont{F}\backslash \spfont{E}$,
is within the cone  $\mathcal{C}_{\uparrow}^{\spfont{F}}$. 
We will refer to this vector as the exchange-rate vector, 
as its elements reveal the effective rate exchange that rate-splitting entails.
Recalling that the reconstructed rate point $\hat{R} = (\hat{R}_S:S\in\spfont{F})$ 
is within the union of polytopes described by \eqref{eqn:packing_constraint_BC},
 these observations lead to the following geometrical characterization
 of Theorem  \ref{thm:superposition_DMBC_IB}.

\begin{theorem}\label{thm:innerbound_exchangeratedescription}
For the $K$-receiver broadcast channel with message index set $\spfont{E}$, 
if $\spfont{F}$ is a superset of $\spfont{E}$ and is equipped with a superposition order, 
then the non-negative rates $(R_S:S\in\spfont{E})$ within
\begin{equation}	
		\bigcup_{X'\in\mathcal{A}_{(\leq)}^{\spfont{F}}}	 
		\left( \bigcap_{j=1}^K\mathcal{P}_{\downarrow}^{(j)}(X';\spfont{F})  
				+ \mathcal{C}_{\uparrow}^{\spfont{F}}\right)\cap \mathbb{R}_+^{\spfont{F}\to \spfont{E}} 
	 	\label{eqn:direct_sum_IB_BC}
\end{equation}
are achievable, where 
\begin{IEEEeqnarray}{c}	\mathbb{R}_+^{\spfont{F}\to \spfont{E}} = \{	R\in\mathbb{R}_+^\spfont{F}: 
	R_S = 0\text{ if } S\in \spfont{F}\backslash \spfont{E}	\}  \label{eqn:projection_operator} \\ 		\IEEEnonumber
\end{IEEEeqnarray}
\end{theorem}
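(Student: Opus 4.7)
The plan is to show that any $R \in \mathbb{R}_+^{\spfont{F}}$ satisfying $R_S = 0$ for $S \in \spfont{F} \setminus \spfont{E}$ and belonging to \eqref{eqn:direct_sum_IB_BC} admits a collection of non-negative split-rates fulfilling the hypotheses of Theorem~\ref{thm:superposition_DMBC_IB}, from which achievability is immediate. The manipulations in \eqref{eqn:ratedecomposition}--\eqref{eqn:exchangerates2} preceding the statement already describe the forward map sending split-rates to a decomposition $R = \hat{R} + \Delta$ with $\hat{R}$ in the polytope intersection and $\Delta \in \mathcal{C}_{\uparrow}^{\spfont{F}}$; the task is to invert it.

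First I would fix a witness $R = \hat{R} + \Delta$ with $\hat{R} \in \bigcap_{j=1}^K \mathcal{P}_\downarrow^{(j)}(X';\spfont{F})$ for some $X' \in \mathcal{A}_{(\leq)}^{\spfont{F}}$, and among all non-negative expansions $\Delta = \sum c_{S \to S'}\, e_{S \to S'}$ (summed over $S \in \spfont{E}$, $S' \in \spfont{F}$, $S \subsetneq S'$) select one that minimizes $\sum c_{S \to S'}$. Such a minimizer exists because the polytope of expansions is bounded---the generator matrix $[e_{S \to S'}]$ admits no non-trivial non-negative null-space element, as any such element would encode a positive circulation on the DAG with vertex set $\spfont{E}$. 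I would then set $r_{S \to S'} := c_{S \to S'}$ for $S \subsetneq S'$ and $r_{S \to S} := R_S - \sum_{S' \supsetneq S} c_{S \to S'}$ for $S \in \spfont{E}$, so that \eqref{eqn:targetrates_BC} holds by construction. A short calculation combining $R = \hat{R} + \Delta$ with $R_{S'} = 0$ for $S' \in \spfont{F} \setminus \spfont{E}$ shows that the reconstructed rates of \eqref{eqn:reconstructedrates_BC} match $\hat{R}_{S'}$ and hence lie inside \eqref{eqn:packing_constraint_BC}.

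The main obstacle is checking that $r_{S \to S} \geq 0$ for every $S \in \spfont{E}$; the two equivalent forms $r_{S \to S} = R_S - \sum_{S' \supsetneq S} c_{S \to S'} = \hat{R}_S - \sum_{S'' \in \spfont{E},\, S'' \subsetneq S} c_{S'' \to S}$ need not be non-negative for a generic decomposition. The minimality choice resolves this through the identity $e_{S'' \to S} + e_{S \to S'} = e_{S'' \to S'}$, valid whenever $S'', S \in \spfont{E}$ and $S'' \subsetneq S \subsetneq S'$ in $\spfont{F}$: if both $c_{S'' \to S} > 0$ and $c_{S \to S'} > 0$ were to hold, transferring $\delta := \min(c_{S'' \to S}, c_{S \to S'}) > 0$ units of mass to the composite generator $e_{S'' \to S'}$ would preserve $\Delta$ and non-negativity while strictly lowering $\sum c$, contradicting minimality. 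Hence, for each $S \in \spfont{E}$, at least one of $\sum_{S'' \subsetneq S} c_{S'' \to S}$ and $\sum_{S' \supsetneq S} c_{S \to S'}$ vanishes, forcing $r_{S \to S}$ to reduce to the non-negative quantity $\hat{R}_S$ or $R_S$, respectively. Theorem~\ref{thm:superposition_DMBC_IB} then delivers the achievability of $R$.
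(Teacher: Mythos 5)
Your proposal is correct, and it actually supplies the direction the paper leaves implicit. The paper's treatment of this theorem consists of the forward observation in \eqref{eqn:ratedecomposition}--\eqref{eqn:exchangerates2}: starting from split-rates as in Theorem~\ref{thm:superposition_DMBC_IB}, it exhibits the decomposition $R=\hat{R}+\Delta$ with $\Delta\in\mathcal{C}_{\uparrow}^{\spfont{F}}$ and then simply declares the Minkowski-sum form as a restatement, without verifying the reverse inclusion that the theorem as stated actually requires (namely that every point of \eqref{eqn:direct_sum_IB_BC} arises from \emph{non-negative} split-rates). You prove exactly that reverse inclusion: given $R=\hat{R}+\Delta$, you invert the map by reading off $r_{S\to S'}=c_{S\to S'}$ from a conic decomposition of $\Delta$ and setting $r_{S\to S}=R_S-\sum_{S'\supsetneq S}c_{S\to S'}$, and you correctly identify the one genuine obstruction, the possible negativity of $r_{S\to S}$, resolving it with the rerouting identity $e_{S''\to S}+e_{S\to S'}=e_{S''\to S'}$ applied to a mass-minimal decomposition (the composite generator is legitimate since $S''\in\spfont{E}$, $S'\in\spfont{F}$, $S''\subsetneq S'$), so that at each $S\in\spfont{E}$ either the incoming or the outgoing mass vanishes and $r_{S\to S}$ equals $\hat{R}_S\geq 0$ or $R_S\geq 0$. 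Your verification that the reconstructed rates coincide with $\hat{R}$ (using $R_{S'}=0$ off $\spfont{E}$) is also right. Two cosmetic points: the acyclicity argument lives on the vertex set $\spfont{F}$ (edges from $\spfont{E}$-sources to strictly larger $\spfont{F}$-targets), not $\spfont{E}$; and you could sidestep boundedness altogether, either by noting that the linear program $\min\sum c$ over the nonempty polyhedron $\{c\geq 0: \sum c_{S\to S'}e_{S\to S'}=\Delta\}$ is bounded below by zero and hence attains its optimum, or by observing that the functional $\sum_{A}|A|\,R_A$ is strictly negative on every generator, which rules out nonzero non-negative circulations. In short, your argument is a correct and more complete proof than the paper's informal derivation, which only records the easy direction.
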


\subsection{Specializations \label{sec:nosuperposition}}
We highlight two specializations of Theorem \ref{thm:superposition_DMBC_IB},
 corresponding to two  alternative choices of superposition order on
 the superset $\spfont{F}$ of $\spfont{E}$. In general, the choice
 of  superposition order corresponds to a choice in dependency in the generation of auxiliary codewords.

One possibility is to generate auxiliary codewords that are conditionally
dependent whenever possible. This corresponds to 
equipping $\spfont{F}$ with the superposition order of set inclusion. 
In this case, the set of superposition-admitting random
variables, $\mathcal{A}_{(\subseteq)}^{\spfont{F}}$, contains
those random variables $X' = (X,U_{\spfont{F}},Q)$
where the density of the auxiliary random variables and time-sharing
random variable factors
  as $p(Q)\prod_{S\in\spfont{F}}p(U_S|(U_{S'}: S\subset S'),Q)$.
 The constituent polyhedra are denoted by
$\mathcal{P}_{(\subseteq)}^{(j)}(X';\spfont{F})$,
as defined in \eqref{eqn:downsetpolymatroid}, where
a subset $\spfont{B}$ of $\spfont{W}_j^{\spfont{F}}$ is within
$\mathcal{F}_{(\subseteq)}(\spfont{W}_j^{\spfont{F}})$ if, for every $S\in\spfont{B}$,
and $S'\subseteq S$ with $S'\in \spfont{F}$, $S'\in \spfont{B}$.
% Moreover, polyhedron 
% the set of down-sets in 
% For each $X'\in \mathcal{A}_{(\subseteq)}^{\spfont{F}}$, define the polyhedron
% \begin{IEEEeqnarray}{l}
% 	\mathcal{P}_{(\subseteq)}^{(j)}(X';\spfont{F}) =  \big\{	R \in \mathbb{R}_+^{\spfont{F}}: \IEEEnonumber \\
% 		\quad \sum_{S\in \spfont{B}} R_{S} \leq I(U_\spfont{B};Y_j|U_{\spfont{F}\backslash \spfont{B}},Q) 
% 				\ \forall\ \spfont{B}\in \mathcal{F}_{(\subseteq)}(\spfont{W}_j^{\spfont{F}})	\big\}
% 				 \IEEEeqnarraynumspace\label{eqn:downsetpolymatroid2}
% \end{IEEEeqnarray}
With this superposition order, Theorem \ref{thm:superposition_DMBC_IB} yields
\begin{corollary}\label{corr:fullsuperposition_DMBC_IB}
For the $K$-receiver broadcast channel with message index set $\spfont{E}$, 
if $\spfont{F}$ is a superset of $\spfont{E}$ ordered by set inclusion, then
the non-negative rates $(R_S:S\in \spfont{E})$ within the rate region
\begin{equation}	
		\bigcup_{X'\in\mathcal{A}_{(\subseteq)}^{\spfont{F}}}	
			\left( \bigcap_{j=1}^K\mathcal{P}_{(\subseteq)}^{(j)}(X';\spfont{F}) 
				+ \mathcal{C}_{\uparrow}^{\spfont{F}}\right)\cap \mathbb{R}_+^{\spfont{F}\to \spfont{E}} 
\end{equation}
where  $X'\in \mathcal{A}_{(\subseteq)}^{\spfont{F}}$, are achievable.
\end{corollary}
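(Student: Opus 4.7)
The plan is to derive this corollary as a direct specialization of Theorem \ref{thm:innerbound_exchangeratedescription} once we fix the superposition order on $\spfont{F}$ to be set inclusion $\subseteq$. First I would verify that $\subseteq$ is a valid superposition order in the sense of the paper's notation: it is a partial order on $\spfont{F}\subseteq 2^{\mL{K}}$ that satisfies $S\leq S'$ only when $S\subseteq S'$ (tautologically), so Definition \ref{def:superposition_admissibleRVs} and the hypotheses of Theorems \ref{thm:superposition_DMBC_IB} and \ref{thm:innerbound_exchangeratedescription} apply verbatim.

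Next I would compute what each abstract object specializes to under this concrete order. The up-set generated by $\{S\}$ is $\upset{\{S\}}=\{S'\in\spfont{F}:S\subseteq S'\}$, so $\upset{\{S\}}\setminus\{S\}=\{S'\in\spfont{F}:S\subset S'\}$, and the factorization \eqref{eqn:pmf_factorization} becomes
\begin{equation*}
p(U_{\spfont{F}},Q)=p(Q)\prod\nolimits_{S\in\spfont{F}}p(U_S\mid (U_{S'}:S\subset S'),Q),
\end{equation*}
which is precisely the factorization used to define $\mathcal{A}_{(\subseteq)}^{\spfont{F}}$. Analogously, a subset $\spfont{B}\subseteq\spfont{W}_j^{\spfont{F}}$ is a down-set of $(\spfont{W}_j^{\spfont{F}},\leq)$ with $\leq$ taken to be $\subseteq$ exactly when for every $S\in\spfont{B}$ and every $S'\in\spfont{F}$ with $S'\subseteq S$ one has $S'\in\spfont{B}$; this is the definition of $\mathcal{F}_{(\subseteq)}(\spfont{W}_j^{\spfont{F}})$ given just before the corollary. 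Consequently, each polyhedron $\mathcal{P}_{\downarrow}^{(j)}(X';\spfont{F})$ of \eqref{eqn:downsetpolymatroid} becomes the polyhedron $\mathcal{P}_{(\subseteq)}^{(j)}(X';\spfont{F})$.

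Finally, substituting these identifications into the Minkowski-sum characterization \eqref{eqn:direct_sum_IB_BC} of Theorem \ref{thm:innerbound_exchangeratedescription} immediately produces the claimed inner bound, since the cone $\mathcal{C}_{\uparrow}^{\spfont{F}}$ and the embedding $\mathbb{R}_+^{\spfont{F}\to\spfont{E}}$ are defined intrinsically in terms of $\spfont{E}$ and $\spfont{F}$ and do not depend on the choice of superposition order. The only real obstacle is notational bookkeeping: checking that the two definitions of down-set — the abstract one from $\mathcal{F}_{\downarrow}(\spfont{W}_j^{\spfont{F}})$ instantiated at the order $\subseteq$, and the concrete one denoted $\mathcal{F}_{(\subseteq)}(\spfont{W}_j^{\spfont{F}})$ — coincide, which as noted above is essentially tautological.
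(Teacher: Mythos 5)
Your proposal is correct and matches the paper's own (very brief) argument: the corollary is obtained exactly by specializing the general inner bound (Theorem \ref{thm:superposition_DMBC_IB}, in its Minkowski-sum form of Theorem \ref{thm:innerbound_exchangeratedescription}) to the set-inclusion superposition order, with the checks that $\mathcal{A}_{(\subseteq)}^{\spfont{F}}$, $\mathcal{P}_{(\subseteq)}^{(j)}$, and $\mathcal{F}_{(\subseteq)}(\spfont{W}_j^{\spfont{F}})$ are the corresponding specializations being the same routine bookkeeping you carry out.
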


Another possibility is to generate auxiliary codewords that are all
conditionally independent given a coded time-sharing sequence. This corresponds to 
equipping $\spfont{F}$ with the discrete order, so that 
every pair of sets within $\spfont{F}$ are incomparable. 
The set of superposition admitting random variables,
$\mathcal{A}_{(=)}^{\spfont{F}}$, contains those random variables
$X' = (X,U_{\spfont{F}},Q)$ for which 
the auxiliary random variables $(U_S:S\in \spfont{F})$
are mutually independent, conditioned on the time-sharing random variable $Q$. 
 The constituent polyhedra are denoted by
$\mathcal{P}_{(=)}^{(j)}(X';\spfont{F})$, where the
down-set lattice $\mathcal{F}_{(\subseteq)}(\spfont{W}_j^{\spfont{F}})$
contains every subset $\spfont{B}\subseteq \spfont{W}_j^{\spfont{F}}$.
% For each $X'\in \mathcal{A}_{\mathrm{(=)}}^{\spfont{F}}$, define
% the polyhedron
% \begin{IEEEeqnarray}{l}
% 	\mathcal{P}_{(=)}^{(j)}(X';\spfont{F}) =  \big\{	R \in \mathbb{R}_+^{\spfont{F}}: \IEEEnonumber \\
% 		 	\sum_{S\in B} R_{S} \leq I(U_\spfont{B};Y_j|U_{\spfont{F}\backslash \spfont{B}},Q) 
% 				\ \forall\  \spfont{B}\subseteq \spfont{W}_j^{\spfont{F}}	\big\} 
% 				 \IEEEeqnarraynumspace\label{eqn:powersetpolymatroid3}
% \end{IEEEeqnarray}
With this superposition order, Theorem \ref{thm:superposition_DMBC_IB} yields
\begin{corollary}\label{corr:nosuperposition_DMBC_IB}
For the $K$-receiver broadcast channel with message index set $\spfont{E}$, 
if $\spfont{F}$ is a superset of $\spfont{E}$ ordered by the discrete order, then
the non-negative rates $(R_S:S\in \spfont{E})$ within the rate region
\begin{equation}	
		\bigcup_{X'\in\mathcal{A}_{(=)}^{\spfont{F}}}	
		\left( \bigcap_{j=1}^K\mathcal{P}_{(=)}^{(j)}(X';\spfont{F}) 
				+ \mathcal{C}_{\uparrow}^{\spfont{F}}\right)\cap \mathbb{R}_+^{\spfont{F}\to \spfont{E}} 
\end{equation}
where $X'\in \mathcal{A}_{(=)}^{\spfont{F}}$, are achievable.
\end{corollary}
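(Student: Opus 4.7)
The plan is to derive this corollary as a direct specialization of Theorem \ref{thm:innerbound_exchangeratedescription} with $\spfont{F}$ equipped with the discrete order. First I would verify that the discrete order qualifies as a superposition order in the sense required by the theorem: since $S\leq S'$ holds only when $S=S'$, the defining implication that ``$S\leq S'$ only when $S\subseteq S'$'' is satisfied vacuously, and the hypotheses of Theorem \ref{thm:innerbound_exchangeratedescription} are met.

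Next I would instantiate each object appearing in \eqref{eqn:direct_sum_IB_BC} under this choice of order. For the factorization in \eqref{eqn:pmf_factorization}, under the discrete order $\upset{\{S\}}\backslash\{S\}=\emptyset$ for every $S\in\spfont{F}$, so the joint pmf collapses to $p(Q)\prod_{S\in\spfont{F}}p(U_S|Q)$, which is precisely the conditional-independence structure defining $\mathcal{A}_{(=)}^{\spfont{F}}$. Similarly, every subset of $\spfont{W}_j^{\spfont{F}}$ is trivially a down-set (the only element at or below $S$ under the discrete order is $S$ itself), so the family $\mathcal{F}_{\downarrow}(\spfont{W}_j^{\spfont{F}})$ appearing in \eqref{eqn:downsetpolymatroid} becomes the full power set of $\spfont{W}_j^{\spfont{F}}$, and the resulting polyhedron coincides with $\mathcal{P}_{(=)}^{(j)}(X';\spfont{F})$.

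Finally, since neither the rate-splitting cone $\mathcal{C}_{\uparrow}^{\spfont{F}}$ nor the embedding $\mathbb{R}_+^{\spfont{F}\to\spfont{E}}$ depends on the superposition order chosen on $\spfont{F}$, substituting the above specializations directly into \eqref{eqn:direct_sum_IB_BC} yields the rate region in the statement. The main obstacle I would anticipate is essentially bookkeeping --- confirming that the empty product in \eqref{eqn:pmf_factorization} and the vacuous down-closure condition are both handled consistently with the definitions introduced in Section \ref{sec:nosuperposition}; no new information-theoretic content beyond Theorem \ref{thm:innerbound_exchangeratedescription} is needed, as the corollary is obtained purely by an order-theoretic specialization.
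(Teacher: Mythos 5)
Your proposal is correct and follows essentially the same route as the paper: the corollary is obtained by specializing the general inner bound (in its Minkowski-sum form) to the discrete order, observing that $\upset{\{S\}}\backslash\{S\}=\emptyset$ collapses the factorization \eqref{eqn:pmf_factorization} to conditional independence given $Q$ (i.e., $\mathcal{A}_{(=)}^{\spfont{F}}$) and that every subset of $\spfont{W}_j^{\spfont{F}}$ is a down-set, so the constraint family becomes the full power set and the polyhedra become $\mathcal{P}_{(=)}^{(j)}(X';\spfont{F})$. The only quibble is wording: the condition ``$S\leq S'$ only when $S\subseteq S'$'' is satisfied trivially (since $S\leq S'$ forces $S=S'$) rather than vacuously, but this does not affect the argument.
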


\section{Examples of rate regions for two- and three-receiver BCs}
In this section, we show that the inner bound in Theorem
\ref{thm:superposition_DMBC_IB} 
is sufficiently general to include previous characterizations of
superposition coding in the BC.

\subsection{Two-receiver BC}
\subsubsection{Degraded Messages}
Consider the two-user BC in 
the special case where $R_2 = 0$. This is
the degraded message case, with $\spfont{E} = \{1,12\}$. 
Korner and Marton \cite{Korner:1977tl} determine
that the capacity region contains the non-negative  rates for which
\begin{IEEEeqnarray}{rl}	\IEEEyesnumber
	R_{12} &\leq I(U_{12};Y_2)\IEEEyessubnumber	 \\
	R_1 &\leq  I(U_1;Y_1|U_{12}) \IEEEyessubnumber \\
	R_1 + R_{12} &\leq  I(U_1;Y_1) \IEEEyessubnumber
\end{IEEEeqnarray}
for some pair $(U_1,U_{12})$, where $X=U_1$
and $U_{12}\markov U_1 \markov Y$ form a Markov chain.
Achievability of this rate region follows from Corollary
\ref{corr:fullsuperposition_DMBC_IB}  when $\spfont{F} = \spfont{E} = \{1,12\}$
and both rate-splitting and time-sharing are omitted.% 
%\footnote{This connection is clear upon observing that
%$I(U_1;Y_1) = I(U_1,U_{12};Y_1)$ by the described Markov relation.}
%Though it is not needed to attain capacity, rate-splitting
%enlarges the rates achievable per choice of auxiliary
%random variables, and yields that for a specific
%choice of $(U_1,U_{12})$, the rates satisfying
%\begin{align*}
%	R_{12} &\leq I(U_{12};Y_2) \\
%	R_1 + R_{12} &\leq  I(U_1;Y_1|U_{12})   + I(U_{12};Y_2) \\
%	R_1 + R_{12} &\leq  I(U_1,U_{12};Y_1)
%\end{align*}
%are achievable, as provided by \eqref{eqn:twouserFM}  
%when $R_2 = 0$.

\subsubsection{Cover's Inner Bound}
For the two-receiver DM BC with message index set
 $\spfont{E} = \{1,2,12\}$,  Cover \cite{Cover:1975kj}
determines that any non-negative rate point $(R_1,R_2,R_{12})$
is achievable if 
\begin{IEEEeqnarray}{rl}\IEEEyesnumber
	R_1 &\leq I(U_1;U_{12},Y_1|Q) \IEEEyessubnumber\label{eqn:cover_polymatroid1a}\\
	R_{12} &\leq I(U_{12};U_1,Y_1|Q) \IEEEyessubnumber \\
	R_1 + R_{12} &\leq I(U_1,U_{12};Y_1|Q) \IEEEyessubnumber \label{eqn:cover_polymatroid1c}\\[5pt]
	R_2 &\leq I(U_2;U_{12},Y_2|Q)  \IEEEyessubnumber   \label{eqn:cover_polymatroid2a}  \\
	R_{12} &\leq I(U_{12};U_2,Y_2|Q) \IEEEyessubnumber\\
	R_2 + R_{12} &\leq I(U_2,U_{12};Y_2|Q) \IEEEyessubnumber \label{eqn:cover_polymatroid2c}\ 
\end{IEEEeqnarray}
for a channel input $X$ that is a deterministic function of a 
time-sharing random variable $Q$  and a set
of auxiliary random variables $U_1,U_2,U_{12}$  that are independent  
when conditioned on $Q$. 

This inner bound follows from Corollary \ref{corr:nosuperposition_DMBC_IB} 
when $\spfont{F} = \spfont{E}$ and rate-splitting is omitted.
By the conditional independence
of the auxiliary random variables,
$I(U_{\spfont{B}};U_{\{1,12\}\backslash \spfont{B}},Y_1|Q)
		= I(U_{\spfont{B}};Y_1|U_{\{1,12\}\backslash \spfont{B}},Q)$
for each non-empty subset $\spfont{B}\subseteq \{1,2\}$,
and analogously for the terms involving the second receiver.
Hence, the conditions that 
$(R_1,R_2,R_{12})$ be within the polyhedron $\mathcal{P}_{(=)}^{(1)}(X';\{1,2,12\})$
and $\mathcal{P}_{(=)}^{(2)}(X';\{1,2,12\})$, when $X' = (X,U_1,U_2,U_{12},Q)$,
are the conditions \eqref{eqn:cover_polymatroid1a}-\eqref{eqn:cover_polymatroid1c}
and   \eqref{eqn:cover_polymatroid2a}-\eqref{eqn:cover_polymatroid2c}, 
respectively. 

\subsubsection{Rate-Splitting and Partial Interference Decoding for 2-Receiver BC}
The inner bound of Theorem \ref{thm:superposition_DMBC_IB} 
 is implicitly described in  terms of split-rates as well as the desired rates. 
 In principle, the split-rates can be projected away with Fourier Motzkin, but in
practice, this is only possible for very small settings. 
For example,  in the two-user BC with message index set 
$\spfont{E} = \{1,2,12\}$, with the order relation taken to be that of subset inclusion, for each set of input, auxiliary, and coded time-sharing random variables 
$X' = (X,U_1,U_2,U_{12},Q)\in\mathcal{A}_{(\subseteq)}^{\spfont{E}}$, the polyhedron
$ %\left( \left(
\mathcal{P}_{\downarrow}^{(1)}(X';\spfont{E})\cap 
	 		  \mathcal{P}_{\downarrow}^{(2)}(X';\spfont{E})
              %\right) 
%				+ \mathcal{C}_{\uparrow}^{\spfont{E}}
%\right)
%\cap \mathbb{R}_+^{\spfont{E}}
$
consists of those non-negative rates which satisfy 
\begin{IEEEeqnarray}{rl}\IEEEyesnumber\label{eqn:twouserFM}
	R_1 + R_{12} &\leq I(U_1,U_{12};Y_1|Q) \IEEEyessubnumber\\
	R_2 + R_{12} &\leq I(U_2,U_{12};Y_2|Q) \IEEEyessubnumber \\
	R_1 + R_2 + R_{12} &\leq I(U_2,U_{12};Y_2|Q)  + I(U_1;Y_1|U_{12},Q)\IEEEyessubnumber \\
	R_1 + R_2 + R_{12} &\leq  I(U_2;Y_2|U_{12},Q) + I(U_1,U_{12};Y_1|Q)\IEEEeqnarraynumspace \IEEEyessubnumber  
\end{IEEEeqnarray}
Note that in this two-user setting, the message index set $\spfont{E}$ is as large
as it can be, so necessarily $\spfont{F}=\spfont{E}$.

\subsection{Three-receiver BCs}
\subsubsection{A class with with degraded message sets}
For the three-receiver BC  with message index
set $\spfont{E} = \{1,123\}$ and the degraded receiver pair
  $X\markov Y_1\markov Y_2$, Nair and El Gamal \cite{Nair:2009is}
  determine that the capacity region is given by
   the non-negative rates for which
 \begin{IEEEeqnarray}{rl}	\IEEEyesnumber \label{eqn:NEG_rateregion}
 	R_{123} &\leq\min\{ I(U;Y_2),I(V;Y_3)\}	\IEEEyessubnumber	\\
	R_1 & \leq I(X;Y_1|U)	\IEEEyessubnumber \\
	R_1 + R_{123} &\leq I(V;Y_3) + I(X;Y_1|V) \IEEEyessubnumber
 \end{IEEEeqnarray} 
 for some pair $U,V$  for which $U\markov V \markov X$.
 Further, by optimizing over the choice of auxiliary random variables,
 it can be shown that every rate in the capacity region
 satisfies \eqref{eqn:NEG_rateregion} and
 the additional inequality
  \begin{IEEEeqnarray}{rl}	
	R_1  & \leq I(X;Y_1|V) + I(V;Y_3|U)  \label{eqn:NEG_rateregion2}
 \end{IEEEeqnarray} 
  for some pair $U,V$ for which $U\markov V \markov X$ \cite{Nair:2009is}.
  
The inner bound  of Corollary \ref{corr:fullsuperposition_DMBC_IB} is 
equal to this second capacity characterization.
To achieve this rate region, it suffices to omit coded time-sharing
and consider rate-splitting
 in the context of the enlarged message index set $\spfont{F}= \{1,13,123\}$.
With these restrictions, Theorem \ref{thm:superposition_DMBC_IB}
states that  the rates $(R_1,R_{123})$ are achievable
 if there exist non-negative split-rates
$r_{1\to 1} = \hat{R}_1$, $r_{1\to 13} = \hat{R}_{13}$,
and $r_{123\to 123} = \hat{R}_{123}$ 
so that $R_1 = \hat{R}_1 + \hat{R}_{13}$, $R_{123} = \hat{R}_{123}$,
and
\begin{IEEEeqnarray}{rl}\IEEEyesnumber\label{eqn:nairEG_polyhedron}
	\hat{R}_1 &\leq I(U_1;Y_1|U_{13},U_{123}) \IEEEyessubnumber \\
	\hat{R}_1 + \hat{R}_{13} &\leq I(U_1,U_{13};Y_1|U_{123}) \IEEEyessubnumber \\
	\hat{R}_1 + \hat{R}_{13}  +\hat{R}_{123} &\leq I(U_1,U_{13},U_{123};Y_1) \IEEEyessubnumber \\[5pt]
	\hat{R}_{123} & \leq I(U_{123};Y_2) \IEEEyessubnumber \\[5pt]
	\hat{R}_{13}&\leq I(U_{13};Y_3|U_{123}) \IEEEyessubnumber\label{eqn:uniquedec_constraint}\\ 
	\hat{R}_{13} + \hat{R}_{123} & \leq I(U_{13},U_{123};Y_3)\  \IEEEyessubnumber 
\end{IEEEeqnarray}
for some triple $(U_1,U_{13},U_{123})$
where $X = U_1$ and $U_{123}\markov U_{13} \markov U_1$
form a Markov chain.
With Fourier-Motzkin, and the degraded assumption on 
the first two receivers, this demonstrates the achievability
of any rate pair that satisfies \eqref{eqn:NEG_rateregion} and 
\eqref{eqn:NEG_rateregion2}, with $U=U_{123}$ and $V=U_{13}$.

%Nair and El Gamal \cite{Nair:2009is}
%call the rate region \eqref{eqn:NEG_rateregion2} the unique-decoding rate region, 
%as the third receiver uniquely decodes the interference characterized
%by the message $\hat{M}_{13}$. The rate region
%\eqref{eqn:NEG_rateregion} can be directly achieved
%by non-uniquely decoding the interfering 
%message $\hat{M}_{13}$, which allows the inequality \eqref{eqn:uniquedec_constraint}
%to be omitted. 

\subsection{Combination Networks}
Consider a noiseless $K$-user broadcast network
where the input is comprised of the $2^K-1$ components
$(V_S: \emptyset \subset S\subseteq \mL{K})$, indexed by
the non-empty subsets of $\mL{K}$, and 
the $j$th output has noiseless access to those
 components that are indexed by $\spfont{W}_j$
 so that  $Y_j = (V_S : S\in \spfont{W}_j)$.
For each $S\subseteq \mL{K}$, the
component $V_S$ is assumed to be within a finite
alphabet $\mathcal{V}_S$; let $C_S = \log_2|\mathcal{V}_S|$. 

When $K=3$, and the message set demand contains all possible
messages, so that $\spfont{E} = 2^{\mL{K}}$,
the capacity of this channel
was determined in \cite{Grokop:2008ti}.
Corollary \ref{corr:fullsuperposition_DMBC_IB} provides
an alternative proof of achievability of this result
for a particular choice of auxiliary random variables,
as discovered in \cite{romero2016superposition}.
That choice assigns 
\begin{equation}
	U_S\sim\mathrm{Uniform}(\mathcal{V}_S)
	\qquad
	V_S = U_S		\label{eqn:specificAuxRVchoice}
\end{equation}
so that $I(U_\spfont{B};Y_j|U_{\spfont{W}_j\backslash \spfont{B}})
= \sum_{S\in \spfont{B}}C_S$.
%for
%each $j\in\mL{K}$ and subset 
%$\spfont{B}\in\mathcal{F}_\downarrow(\spfont{W}_j)$. 
With computer-aided Fourier-Motzkin \cite{gattegno2016fourier}, 
the split-rates can be projected away,
yielding the 15 defining inequalities of the capacity region.
%\subsection{$K$-receiver BCs}

When the number of users is arbitrary, but
a symmetry assumption of $C_S = C_{|S|}$ and
$R_S = R_{|S|}$ is imposed across all $S\subseteq \mL{K}$,
then the capacity region was determined in
\cite{Tian:2011kh} and \cite{Salimi:2015eh}.
In this setting, the specific auxiliary variable choice 
\eqref{eqn:specificAuxRVchoice} within 
Corollary \ref{corr:fullsuperposition_DMBC_IB} yields
an alternative proof of the achievability of the capacity 
region, as detailed in \cite{romero2016superposition}.

% Without rate-splitting, Theorem \ref{thm:superposition_DMBC_IB}
% yields that the rates which satisfy
% \begin{align*}
% 	R_{\{j\}} &\leq C_{\{j\}}\\
% 	R_{\{j\}} + R_{\{j,i\}} &\leq C_{\{j\}} + C_{\{j,i\}}\\
% 	R_{\{j\}} + R_{\{j,k\}} &\leq C_{\{j\}} + C_{\{j,k\}}\\
% 	R_{\{j\}} + R_{\{j,i\}}+ R_{\{j,k\}} 
% 		&\leq C_{\{j\}} + C_{\{j,i\}} + C_{\{j,k\}}\\
% 	R_{\{j\}} + R_{\{j,i\}}+ R_{\{j,k\}} + R_{\{1,2,3\}}&\\
% 	\leq C_{\{j\}} + &C_{\{j,i\}} + C_{\{j,k\}} + C_{\{1,2,3\}}
% \end{align*}
% for each $j\in\mL{3}$, where $\{i,k\} = \mL{3}\backslash \{j\}$.
% With rate-splitting, the region is subsequently enlarged
% to encompass the entire capacity region.

%In \cite{romero2016superposition}.

\section{Binning}
A last element is to consider binning, which allows consideration of 
\emph{arbitrary} input distributions. The central idea is create an excessively
large codebook, with rates $\tilde{R}_S\geq R_S$ 
where each message has a list of codewords of exponential size $2^{n(\tilde{R}_S-R_S)}$,
 rather than a single codeword. If the rate excesses $\tilde{R}_S-R_S$ 
 are sufficiently large, then every message can jointly select a set of codewords
 that appear as though they were jointly
generated with respect to an arbitrary joint distribution, rather than
according to its recursive marginal distributions.

In particular, the excess rates will be $\tilde{R}_S$, with the excess
over the desired rate being $r_S= \tilde{R}_S-R_S$ for each $S\in \spfont{E}$.
The key result is a recursive generalization of the mutual covering lemma of El Gamal and van der Meulen in \cite{el1981proof} (see also
\cite[Lemma 8.1]{Gamal:2012}).

\begin{lemma}[Recursive Mutual Covering Lemma]\label{recursive_mutualcoveringlemma} Let $(U_S:S\in \spfont{E})$
have arbitrary joint distribution $p(u_S:S\in \spfont{E})$. Pick an order on $\spfont{E}$.
With respect to this order, recursively generate length-$n$ vectors %with components drawn i.i.d. from $p(u_j)$:
\[	u_S^n(m_S) \sim \prod_{t=1}^n p(u_{St}|u_{Rt}:R\in \upset{\{S\}}\backslash\{S\}) \]
 for each $m_S\in\mL{2^{nr_S}}$ and each $S\in \spfont{E}$.
Then the probability that $ (u_S(m_S):S\in \spfont{E})	$
is jointly $\epsilon$-typical for some $(m_S:S\in \spfont{E})$ tends to one
as $n\to \infty$ if the non-negative rates $(r_S:S\in \spfont{E})$ satisfy
\begin{align}
		r(\spfont{G}) \geq &\sum_{S\in \spfont{G}} H(U_S|U_{\upset{\{S\}}\backslash\{S\}}) 
				- H(U_{\spfont{G}}) \triangleq \gamma(\spfont{G}),	\label{eqn:mutual_covering_conditions_LCI}
\end{align}
for all up-sets $\spfont{G}\subseteq \spfont{E}$.
\end{lemma}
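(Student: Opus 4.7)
The approach is the second moment method via the Paley--Zygmund inequality. Let $\mathcal{N}$ count the tuples $\mathbf{m}=(m_S:S\in\spfont{E})$ for which $(u_S^n(m_S):S\in\spfont{E})\in T_\epsilon^{(n)}$; then $\Pr(\mathcal{N}\geq 1)\geq (\mathbb{E}\mathcal{N})^2/\mathbb{E}[\mathcal{N}^2]$, so it suffices to show this ratio tends to $1$. The recursive construction draws any fixed codeword tuple from the product law $q(u_\spfont{E})=\prod_{S\in\spfont{E}}p(u_S\,|\,u_R:R\in\upset{\{S\}}\setminus\{S\})$, and a standard typical-set counting argument (comparing the exponents $H(U_\spfont{E})$ and $\sum_{S}H(U_S|U_{\upset{\{S\}}\setminus\{S\}})$) gives $\Pr[(u_S^n(m_S))\in T_\epsilon^{(n)}]\doteq 2^{-n\gamma(\spfont{E})}$. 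Hence $\mathbb{E}\mathcal{N}\doteq 2^{n(r(\spfont{E})-\gamma(\spfont{E}))}$, which diverges under the rate hypothesis at the up-set $\spfont{G}=\spfont{E}$.

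For the second moment, partition pairs $(\mathbf{m},\mathbf{m}')$ by $\spfont{G}=\{S\in\spfont{E}:u_S^n(\mathbf{m})=u_S^n(\mathbf{m}')\}$, the set of indices on which the two codeword tuples agree. The key structural observation is that $\spfont{G}$ is always an up-set of $\spfont{E}$: equality $u_S^n(\mathbf{m})=u_S^n(\mathbf{m}')$ forces $m_R=m_R'$ for every $R\in\upset{\{S\}}$, and since $\upset{\{S'\}}\subseteq\upset{\{S\}}$ whenever $S\leq S'$, the agreement propagates upward in the order. Because $\spfont{G}$ is an up-set, the marginal generation of $u_\spfont{G}^n$ in the full procedure agrees with the recursive procedure applied to the sub-poset $\spfont{G}$, and given $u_\spfont{G}^n$ the codewords on the complementary down-set $\spfont{D}=\spfont{E}\setminus\spfont{G}$ are conditionally independent for $\mathbf{m}$ and $\mathbf{m}'$. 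Applying the first-moment count on $\spfont{G}$ gives $\Pr[u_\spfont{G}^n\in T_\epsilon^{(n)}]\doteq 2^{-n\gamma(\spfont{G})}$, while a conditional-typicality estimate yields $\Pr[u_\spfont{E}^n\in T_\epsilon^{(n)}\,|\,u_\spfont{G}^n\in T_\epsilon^{(n)}]\doteq 2^{-n(\gamma(\spfont{E})-\gamma(\spfont{G}))}$. Combining, $\Pr[\text{both in }T_\epsilon^{(n)}]\doteq 2^{-n(2\gamma(\spfont{E})-\gamma(\spfont{G}))}$, and the number of pairs shared on $\spfont{G}$ is at most $2^{n(r(\spfont{G})+2r(\spfont{D}))}$; dividing the resulting $\spfont{G}$-contribution by $(\mathbb{E}\mathcal{N})^2\doteq 2^{2n(r(\spfont{E})-\gamma(\spfont{E}))}$ and using $r(\spfont{D})=r(\spfont{E})-r(\spfont{G})$ reduces it to $\doteq 2^{n(\gamma(\spfont{G})-r(\spfont{G}))}$. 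The dominant $\spfont{G}=\emptyset$ term contributes $1$, and the hypothesis $r(\spfont{G})\geq\gamma(\spfont{G})$ for every up-set $\spfont{G}$ is precisely what drives every other term to zero (with the usual $\epsilon\to 0$, $n\to\infty$ slack absorbing the equality case).

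The main obstacle will be justifying the conditional typicality estimate $\Pr[u_\spfont{E}^n\in T_\epsilon^{(n)}\,|\,u_\spfont{G}^n=\tilde{u}_\spfont{G}^n]\doteq 2^{-n(\gamma(\spfont{E})-\gamma(\spfont{G}))}$ uniformly over typical $\tilde{u}_\spfont{G}^n$, since it underpins both the telescoping of the first-moment exponent and the second-moment bound. It follows by applying typical-set counting to the recursive factorization of $q$ restricted to $\spfont{D}$ given its $\spfont{G}$-boundary; here the up-set property of $\spfont{G}$ is essential, as it guarantees that for every $S\in\spfont{D}$ the conditioning variables $\upset{\{S\}}\setminus\{S\}$ partition into a $\spfont{G}$-piece (now fixed) and a $\spfont{D}$-piece (to be generated), so the restricted factorization is well-defined and has the conditioning structure required to make the exponents line up. Once this is in place, the combinatorial content of the lemma reduces cleanly to matching up-set-indexed exponents against up-set-indexed rate constraints.
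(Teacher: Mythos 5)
Your proposal is correct and follows essentially the same route as the paper's proof: a second-moment (Chebyshev/Paley--Zygmund) argument in which pairs of message tuples are partitioned according to which codewords are shared versus independently redrawn under the recursive generation, with joint-typicality counting producing the per-term exponent $2^{n(\gamma(\spfont{G})-r(\spfont{G}))}$ and hence the up-set constraints. The only difference is bookkeeping: you index the cross terms directly by the agreement up-set $\spfont{G}$ (equivalently $\spfont{E}\backslash\downset{\spfont{D}}$), so the up-set inequalities emerge immediately, whereas the paper indexes by the message-disagreement set $\spfont{D}$, obtains conditions involving the largest up-set contained in $\spfont{E}\backslash\spfont{D}$, and then argues in a final step that only the up-set-indexed inequalities need to be enforced.
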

\begin{proof}See Appendix \ref{sec:recursivecoveringlemmaproof}. \end{proof}
This unbounded polyhedron is a contra-polymatroid,  defined only
over the up-set lattice $\mathcal{F}_{\uparrow}$. This follows as
\begin{itemize}
\item $\gamma(\spfont{G})$ is supermodular: $\gamma(\spfont{F}\cap \spfont{G})  + \gamma(\spfont{F} \cup \spfont{G}) \geq \gamma(\spfont{F}) + \gamma(\spfont{G})$,
a consequence of the submodularity of entropy.
\item  $\gamma(\spfont{G})$ is non-increasing, a consequence of the fact
that conditioning reduces entropy: for $\spfont{F}\subseteq \spfont{G}$,
\begin{align*}
	\gamma(\spfont{G}) {-} \gamma(\spfont{F}) &{=}\left( \sum_{S\in \spfont{G}\backslash \spfont{F}}H(U_S|U_{\upset{\{S\}}\backslash\{S\}})\right) {-} H(U_{\spfont{G}\backslash \spfont{F}}|U_\spfont{F}) \\
			& {\geq} \sum_{S\in \spfont{G}\backslash \spfont{F}}\bigg(H(U_S|U_{\upset{\{S\}}\backslash\{S\}}) {-} H(U_S|U_{\spfont{F}}) \bigg)\\
			&\geq \sum_{S\in \spfont{G}\backslash \spfont{F}}\bigg(H(U_S|U_{\spfont{F}}) {-} H(U_S|U_F) \bigg)\\
			&= 0
\end{align*}
\item $\gamma(\spfont{G})$ is normalized: $\gamma(\emptyset) = 0$ as the sum is vacuous.
\end{itemize}

This new recursive mutual covering lemma provides a basis
for a significant generalization of the Marton achievable scheme for the broadcast channel with private messages (given in \cite[Section 8.6]{el2011network}) to a scheme that incorporates Marton-type coding with general forms of rate-splitting and superposition coding and is applicable to arbitrary groupcasting message sets $\spfont{E}$. Indeed, such a scheme was given in special cases (in number of users and/or message sets), all of which require only the mutual covering lemma, namely, in \cite[Theorem 5]{liang2007rate} for the two-user BC with two private and one common message, in \cite{Nair:2009is} (see \cite[Proposition 8.2]{el2011network}) for the three-user broadcast channel with degraded messages, and in \cite{salman2020diamond} for the $K$-user broadcast channel with the so-called diamond groupcasting message set. While these three rate regions were given as unions of explicit polytopes in message rate space (i.e., after Fourier Motzkin elimination) we provide here an implicit description of the inner bound in terms of the split-rates. Explicit descriptions of the inner bound of the following theorem may be possible in other special cases of the general groupcasting message set $\spfont{E}$ (and choices of $\spfont{F}$).

Note that even the private message case where $\spfont{E} = \{1,2, \cdots , K\}$ and any choice of $\spfont{F} \supset \spfont{E}$ would as such constitute a generalization of Marton's inner bound (which corresponds to the no rate-splitting choice $\spfont{F}=\spfont{E}$)), with that no rate-splitting bound thought to be the best known inner bound for the DM broadcast channel with private messages (see \cite[Section 8.6]{el2011network}).  

\begin{theorem}[Generalization of Marton Coding]\label{thm:martonextension}

For the $K$-receiver DM BC with general message sets, 
the non-negative rates $(R_S:S\in \spfont{E})$ are achievable if,
for a message index superset $\spfont{F}$ with $\spfont{E}\subseteq \spfont{F}$ and with $\spfont{F}$
equipped with a superposition order $\leq$, there
 exist non-negative split-rates $(r_{S\to S'}: S\in\spfont{E},S'\in \spfont{F}, S\subseteq S')$,
where the desired rates satisfy
\begin{equation} R_S = \sum\nolimits_{S\subseteq S': S'\in \spfont{F}} r_{S\to S'}	\quad \text{ for all }S\in \spfont{E}
		\label{eqn:targetrates_BC-mc}\end{equation}
while the reconstructed rates 
\begin{equation} 	
	\hat{R}_{S'} = \sum\nolimits_{S\subseteq S': S\in \spfont{E}}r_{S\to S'}	\quad \text{ for all }S'\in \spfont{F} 
	\label{eqn:reconstructedrates_BC-mc}\end{equation}
	satisfy the binning constraints
\begin{align}	\sum\nolimits_{S\in \spfont{G}} (\tilde{R}_S-\hat{R}_S)	&\geq \gamma (\spfont{G}) \label{eq:martonregionextension} 
\end{align}
for every up-set $\spfont{G}\in\mathcal{F}_{\uparrow}(\spfont{F};\leq)$ with the rates $(\tilde{R}_S : S \in \spfont{F})$ constrained to be within the rate region (with $\bar{X} \triangleq (X,U_{\spfont{F}},Q={\rm const.})$)
\begin{IEEEeqnarray}{c}
		%\bigcup_{X'\in \mathcal{A}_{(\leq)}^{\spfont{F}}} 
			\left(\mathcal{P}_{\downarrow}^{(1)}(\tilde{X};\spfont{F})\cap
		 \cdots \cap \mathcal{P}_{\downarrow}^{(K)}(\tilde{X};\spfont{F})\right)  \label{eqn:packing_constraint_BC-mc} \\
		\IEEEnonumber
\end{IEEEeqnarray}
for some arbitrary distribution $p(u_\spfont{F})$ and $X$ a deterministic function of $U_\spfont{F}$.
\end{theorem}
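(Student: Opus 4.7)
The plan is to combine the superposition coding and rate-splitting framework of Theorem \ref{thm:superposition_DMBC_IB} with Marton-style binning, where enlarged codebook rates $\tilde R_S \geq \hat R_S$ give the encoder the freedom needed to produce a codeword tuple jointly typical with respect to an \emph{arbitrary} joint pmf $p(u_\spfont{F})$, rather than one that factors according to the superposition order $\leq$. Fixing such a $p(u_\spfont{F})$ together with the deterministic map $X=f(U_\spfont{F})$, I would first compute the recursive conditional marginals $p(u_S \mid u_{\upset{\{S\}}\backslash\{S\}})$ induced by $p(u_\spfont{F})$. Traversing a linear extension of $\leq$, for each $S\in\spfont{F}$ and each combination of superset indices I would generate $2^{n\tilde R_S}$ codewords $u_S^n(m_S\mid\cdot)$ i.i.d.\ from $p(u_S\mid u_{\upset{\{S\}}\backslash\{S\}})$ and partition them into $2^{n\hat R_S}$ bins indexed by the reconstructed message $\hat M_S$ of \eqref{eqn:reconstructedrates_BC-mc}. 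Given target messages $(M_S)$, the encoder first extracts from the rate-splitting the bin tuple $(b_S=\hat M_S : S\in\spfont{F})$, then searches within those bins for within-bin indices $(m_S)_{S\in\spfont{F}}$ whose codewords are jointly $\epsilon$-typical under $p(u_\spfont{F})$, and transmits $x^n=f(u_\spfont{F}^n)$.

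For the encoding-error step I would observe that within any fixed bin tuple the $2^{n(\tilde R_S-\hat R_S)}$ codewords are, by construction, exactly the recursive i.i.d.\ draws required by Lemma \ref{recursive_mutualcoveringlemma}; applying that lemma with $r_S=\tilde R_S-\hat R_S$ then shows that a jointly typical within-bin tuple exists with probability $1-o(1)$ whenever \eqref{eq:martonregionextension} holds for every up-set $\spfont{G}\in\mathcal{F}_{\uparrow}(\spfont{F};\leq)$. For decoding, each receiver $j$ joint-typicality decodes the codeword-index tuple $(m_S : S\in\spfont{W}_j^{\spfont{F}})$, from which the desired sub-messages are read off. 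Conditioned on encoding success, the selected codewords are $\epsilon$-typical under $p(u_\spfont{F})$ and hence jointly typical with the channel output under $p(u_\spfont{F})\,W(y_j\mid f(u_\spfont{F}))$, while the \emph{un}selected codewords driving the packing error remain marginal i.i.d.\ draws from the same recursive conditionals exploited in the proof of Theorem \ref{thm:superposition_DMBC_IB}. The order-theoretic packing argument sketched there therefore carries over verbatim, now with mutual-information terms evaluated under $p(u_\spfont{F})$, and yields vanishing decoding error whenever $(\tilde R_S : S\in\spfont{F})\in\bigcap_{j=1}^K\mathcal{P}_{\downarrow}^{(j)}(\bar X;\spfont{F})$; any codeword-index slip lands in an incorrect bin (hence a message error) with probability $1-2^{-n\hat R_S}$, so demanding decoding at the full codeword rate is asymptotically no loss.

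The main obstacle is the simultaneous control of the covering and packing events in the presence of the mismatch between the recursive-conditional generation law and the arbitrary target $p(u_\spfont{F})$: the supermodular function $\gamma(\spfont{G})$ measures precisely this mismatch up-set by up-set, and the binning constraints \eqref{eq:martonregionextension} shoulder the entire burden of aligning the selected codewords with $p(u_\spfont{F})$. The delicate step is verifying that conditioning on covering success does not corrupt the packing bounds, which reduces to the standard symmetry observation that the unselected codewords are independent of the covering event and remain marginally distributed as the i.i.d.\ draws assumed in Theorem \ref{thm:superposition_DMBC_IB}. Intersecting the resulting admissible region in $\tilde R$-space with the rate-splitting cone as in Theorem \ref{thm:innerbound_exchangeratedescription}, and projecting back onto the original rates $(R_S : S\in\spfont{E})$ via \eqref{eqn:targetrates_BC-mc}--\eqref{eqn:reconstructedrates_BC-mc}, then completes the proof.
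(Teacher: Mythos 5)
Your proposal follows essentially the same route as the paper's (outlined) proof: rate-splitting and superposition codebook generation along the order $\leq$ with excess rates $\tilde{R}_S$, binning into $2^{n\hat{R}_S}$ bins indexing the reconstructed messages, invoking the recursive mutual covering lemma with $r_S=\tilde{R}_S-\hat{R}_S$ to control the encoding error via \eqref{eq:martonregionextension}, and reusing the packing analysis of Theorem \ref{thm:superposition_DMBC_IB} to obtain the constraint $(\tilde{R}_S)\in\bigcap_j\mathcal{P}_{\downarrow}^{(j)}(\bar{X};\spfont{F})$. The additional details you supply (decoding at the full codeword rates, and the independence of unselected codewords from the covering event) are exactly the standard steps the paper's sketch leaves implicit, so the proposal is correct and consistent with the paper.
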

%\begin{theorem}
%For a $K$-user DM-BC with general message sets, any non-negative rate tuple satisfying the covering constraints
%\begin{align*}
%	R_S &= \sum_{S'\in \uparrow{S}} r_{S\to S'}		\qquad \forall S\in E,
%\end{align*}
%the packing constraints,
%\[		R_{S'} = \sum_{S\in \downarrow{S'}}r_{S\to S'}		\qquad \forall S'\in E,	\]
%the joint decoding constraints
%\begin{equation}	\sum\nolimits_{S\in B}\tilde{R}_S  \leq I(U_B;Y_j|U_{\mathcal{S}_j\backslash B})	\end{equation}
%for every down-set within $\mathcal{S}_j$ and  every $j\in\mL{K}$, and the binning constraints
%\begin{align}	\sum\nolimits_{S\in G} (\tilde{R}_S-R_S)	&\geq \rho_R(G) %\label{eq:martonregionextension} 
%\end{align}
%for every up-set $G\in\mathcal{F}_{\uparrow}(E;\subseteq)$.
%\end{theorem}
\begin{proof}
An outline of the proof is given. The proof involves rate-splitting and message reconstruction as in the achievable scheme used to prove Theorem \ref{thm:superposition_DMBC_IB} with a similar codebook generation scheme using superposition coding according to the general superposition order $\leq $ but each codebook is generated to have excess codewords (with the $U_S$ codebook having $2^{n\tilde{R}_S}$ codewords) which are assigned to $2^{n\hat{R}_S}$ equi-sized bins (of size $2^{n(\tilde{R}_S-\hat{R}_S)}$ codewords) indexing the reconstructed messages $\hat{m}_S$. The selection of the set of codewords from the bins that are jointly typical incurs an error whenever such a set of jointly typical codewords does not exist and this encoding error can be made vanishingly small provided that the binning constraints \eqref{eq:martonregionextension} hold as dictated by the recursive mutual covering Lemma \ref{recursive_mutualcoveringlemma}. 
%as suggested by the proof of \cite[Theorem 8.4]{Gamal:2012} the arguments used in proving Theorem \ref{thm:superposition_DMBC_IB} and the recursive mutual covering lemma to ensure encoding errors are vanishingly small.
\end{proof}

Prior works that combine superposition coding, rate-splitting with binning include the two-user broadcast channel with two private and one common messages \cite{liang2007rate} (see \cite[Theorem 8.4]{el2011network}), the three-user case with three degraded messages in \cite{Nair:2009is} (but with one receiver employing indirect joint decoding) and, more generally, to the diamond message set groupcasting in the $K$-user broadcast channel in \cite{salman2020diamond}. However, in all of these schemes, the classical mutual covering lemma proved by El Gamal and van der Meulen \cite{el1981proof} (cf. \cite{el2011network}) suffices for their analyses. Moreover, because in these problems Fourier Motzkin elimination was possible, direct descriptions of the rate regions are given therein as unions of polytopes in the space of the rates of the messages.

\section{Conclusion}
With the tools of order theory, we provide
a general inner bound for the DM BC with
an arbitrary number of users and an arbitrary
set of message demands, a  setting
which encompasses any type of
 concurrent groupcasting.

%With this inner bound, the  capacity of the 
%$K$-user interleavable broadcast channel 
%with general message sets is characterized, thereby 
%generalizing the same result previously known for the 
%private message case. 
Prior results on the capacity region 
of the two-user broadcast channel with degraded messages
 and a three-user multi-level broadcast channel 
 with two degraded messages can be seen through
  the lens of the general inner bound proposed in this work, 
  as can other superposition coding based achievable 
  schemes with joint desired message and partial interference decoding
  proposed in multiple unicast or private message settings. 

\begin{appendices}

\section{Achievability of reconstructed rates in Theorem \ref{thm:superposition_DMBC_IB}}
The general notion of superposition coding in \cite{romero2017unifying} allows for a unified 
treatment of the achievability of
the reconstructed rate regions in Theorem \ref{thm:superposition_DMBC_IB}.
That is, with the order-theoretic tools described therein, we will 
demonstrate that the reconstructed rates $(\hat{R}_S:S\in\spfont{F})$
are achievable if they are within  
\begin{IEEEeqnarray}{rl}
	&\bigcup\nolimits_{X'\in\mathcal{A}_{(\leq)}^{\spfont{F}}}	 \left(\mathcal{P}_{\downarrow}^{(1)}(X';\spfont{F})\cap 
	 		\cdots \cap \mathcal{P}_{\downarrow}^{(K)}(X';\spfont{F})\right) 
				\label{eqn:reconstructedrateregions_superposition} 
\end{IEEEeqnarray}
where $\mathcal{A}_{(\leq)}^{\spfont{F}}$ is defined in
Definition \ref{def:superposition_admissibleRVs}, and  the 
 polyhedra $\mathcal{P}_{\downarrow}^{(j)}(X';\spfont{F})$, for $j\in\mL{K}$,
  were defined in \eqref{eqn:downsetpolymatroid}.
	
Superposition coding is a random coding strategy
in which the dependencies between the codewords
between distinct message sources follow 
an order on the message index set.
To prove the achievability of \eqref{eqn:reconstructedrateregions_superposition}, 
equip the set $\spfont{F}$ with a superposition order.
Fix a set of input, auxiliary, and coded
time-sharing random variables $(X,U_{\spfont{F}},Q)$
such that $X$ is a deterministic function of 
the auxiliary and coded time-sharing random variables
$(U_{\spfont{F}},Q)$, whose joint distribution factors as
$p(Q)\prod_{S\in\spfont{F}}p(U_S|(U_{S'}:S'\in\spfont{F},S'>S),Q)$.
Let $\{S_1,\ldots,S_M\}$ be an non-increasing
enumeration of $\spfont{F}$. 
First, generate the coded time-sharing sequence
$q^n\sim \prod_{t=1}^n p(q_t)$. Then, for
each $i\in\mL{M}$, and each collection of
messages $m_{\upset{\{S_i\}}} = (m_{S'}: S'\in\spfont{F}, S'\geq S_i)$,
generate the codewords through
$ u_{S_i}^n(m_{\upset{\{S_i\}}} )\sim \prod_{t=1}^np(u_{S,t}|(u_{S',t}:S'\in\spfont{F}, S' > S),q_t)$.
This process can be carried out iteratively
from $i=1$ to $i=M$ as $\upset{\{S_i\}} \subseteq \{S_1,\ldots,S_i\}$
for each $i\in\mL{M}$.

The $j$th receiver jointly decodes the messages
 $m_{\spfont{W}_j^{\spfont{F}}}$, 
 while treating all other messages as noise. To describe
 and analyze this joint decoding rule and its error probabilities,
  let $\delta>0$ be an infinitesimal typicality parameter 
 and $\epsilon>0$ be a target error probability guarantee. With 
 $m\equiv m_{\spfont{W}_j^{\spfont{F}}} $, let  $T(m)$ be the event that
  $(q^n,u_S^n(m_{\upset{\{S\}}}):S\in\spfont{W}_j^{\spfont{F}}, y_j^n)$
is $\delta$-jointly typical with respect to the join distribution 
on $(Q,U_{\spfont{W}_j^{\spfont{F}}},Y)$. The $j$th receiver declares
the message estimates  $\hat{m}$
to be the sent messages if and only if it is the unique set of messages
for which $T(\hat{m})$ occurs. Without loss of generality,
assume that the set message was $m = \M{1}$, 
which is to say that $(m_S =1:S\in \spfont{F})$. An error occurs if either 
a) the event $T(\M{1})$ does not occur, or b) the event
$T(\hat{m})$ occurs from some $\hat{m}\neq \M{1}$. The law
large numbers assures that the probability of the first event 
 vanishes in the block length $n$. For the second
 error event, we show that the probability of 
 certain categories of error events must vanish if
 a certain partial sum-rate condition holds. With the union bound,
 we extend this argument to show that if all such possible
 partial sum-rate bounds hold, then the probability if
 any error must vanish.
 
Let $\spfont{B}$ be any subset of $\spfont{W}_j^{\spfont{F}}$. 
By the joint typicality lemma \cite{Gamal:2012}
one can show that the probability that $T(\hat{m})$
occurs for a wrong message of the form  $\hat{m}_S\neq 1$ for $S\in\spfont{B}$
 and $\hat{m}_S= 1$ for $S\not\in \spfont{B} $ is bounded by 
$2^{-n(I(U_{\spfont{C}};Y|
U_{\spfont{W}_j^{\spfont{F}}\backslash \spfont{C}}) - \epsilon_\delta)}$
for a parameter $\epsilon_\delta$ that tends to zero as $\delta$ vanishes.
Here, $\spfont{C}$ is the smallest down-set within $\spfont{W}_j^{\spfont{F}}$
containing $\spfont{B}$, succinctly denoted by $\spfont{C}= \downset{\spfont{B}}$
for the remainder of this proof \cite{romero2017unifying}.
Together with the union bound, a finite summation yields that the
probability of error vanishes if
$\sum_{S\in\spfont{B}}R_S\leq 
I(U_{\downset{\spfont{B}}};Y|U_{\spfont{W}_j^{\spfont{F}}\backslash \downset{\spfont{B}}})$
for every $\spfont{B} \subseteq \spfont{W}_j^{\spfont{F}}$. 
By removing redundant inequalities, these conditions are 
equivalent to the requirement that
$\sum_{S\in\spfont{B}}R_S\leq 
I(U_{\spfont{B}};Y|U_{\spfont{W}_j^{\spfont{F}}\backslash \spfont{B}})$
for every down-set $\spfont{B}$ in $\spfont{W}_j^{\spfont{F}}$,
with respect to the particular order chosen on $\spfont{F}$.
These are the defining inequalities within \eqref{eqn:downsetpolymatroid}.
Replicating this argument over all superposition-admitting
random variables yields that the rate region 
\eqref{eqn:reconstructedrateregions_superposition} is achievable.

\section{Recursive Mutual Covering Lemma Proof\label{sec:recursivecoveringlemmaproof}}
Let $\mathcal{T}_\epsilon^{(n)}(U_{\spfont{E}})$ be the set of jointly $\epsilon$-typical length-$n$
sequences $(u_S^n:S\in \spfont{E})$ with respect to the joint distribution of $U_{\spfont{E}}$.
Let 
\[	\mathcal{A}=  \{(m_S:S\in \spfont{E}) : (u_S^n(m_S):S\in \spfont{E}) \in \mathcal{T}_\epsilon^{(n)}(U_{\spfont{E}}) \}	\]
be the set of all independently randomly generated vectors which appear
as though they were jointly generated (by being jointly typical).
Chebyshev's inequality supplies
 $P(|\mathcal{A}|= 0)\leq \mathrm{Var}(|\mathcal{A}|)/E(|\mathcal{A}|)^2 $.
The probability mass function governing the distribution
 of a codeword tuple  $(u_S^n(m_s):S\in \spfont{E})$ is independent
  of the message tuple $m\equiv(m_S:S\in \spfont{E})$. Thus 
  each codeword tuple has the same probability of being jointly typical,
  which we define to be
  \[	P(U_{\spfont{E}}^n(m)\in \mathcal{T}_\epsilon(U_{\spfont{E}})) = P(U_{\spfont{E}}^n(\mathbf{1})\in \mathcal{T}_\epsilon(U_{\spfont{E}}))\triangleq p.		\]
By linearity of expectation, $E[|\mathcal{A}|] = 2^{nr(\spfont{E})}p $, where $r(\spfont{E}) \triangleq \sum_{S \in \spfont{E}}r_S$. With $\llbracket \cdot \rrbracket $ denoting the indicator function, introduce
\[		B(m(\spfont{E}),m'(\spfont{E}))	=	\llbracket 	U_{\spfont{E}}^n(m_{\spfont{E}})\in \mathcal{T}_\epsilon(U_{\spfont{E}})	\rrbracket 
		\llbracket 	U_{\spfont{E}}^{n}(m'_{\spfont{E}})\in \mathcal{T}_\epsilon(U_{\spfont{E}})	\rrbracket	\enspace . \]
		Then we may write
\begin{align*}
	E[|\mathcal{A}|^2]
	&=  \sum_{m(\spfont{E}),m'(\spfont{E})}E [ B(m(\spfont{E}),m'(\spfont{E})) ] \\
	&= \sum_{\spfont{D}\subseteq \spfont{E}}\sum_{\substack{m_S \neq m'_S \forall S\in D
	%m(\spfont{D}) \neq m'(\spfont{D}) 
	\\ m({\spfont{D}}^c) = m({\spfont{D}}^c)}}E [ B(m(\spfont{E}),m'(\spfont{E})) ]	\\
	&\leq \sum_{\spfont{D}\subseteq \spfont{E}}  2^{n(2r(\spfont{D}) + r(\spfont{E}\backslash \spfont{D}))}p_{\spfont{D}},
\end{align*}
where $p_{\spfont{D}} = E [ B(m(\spfont{E}),m'(\spfont{E})) ]$ with $(m_S=1:S\in \spfont{E})$
and $(m_S' = 1: S\not\in \spfont{D})$ but $(m_S' =2 :S\in \spfont{D})$, more succinctly
stated as $m = \mathbf{1}$ and $m' = \mathbf{1}+\M{1}(\spfont{D})$
where $\M{1}(\spfont{D}) = (\llbracket S\in \spfont{D} \rrbracket : S\in \spfont{E})$ is the indicator
vector for the subset $\spfont{D}$ of $\spfont{E}$.  Note that the terms corresponding
to $\spfont{D}=\spfont{E}$ or $\spfont{E}=\emptyset$  are easily computed: $p_{\spfont{E}} = p^2$ and $p_\emptyset = p$.
In fact, the term corresponding to $\spfont{D}=\spfont{E}$ is simply $[E[|\mathcal{A}|]]^2$, and thus
\[	\mathrm{Var}(|\mathcal{A}|) \leq \sum_{\spfont{D}\subset \spfont{E}}  2^{n(2r(\spfont{D}) + r(\spfont{E}\backslash \spfont{D}))}p_{\spfont{D}}, 	\]
where the inequality in the summation indices is strict.
%It can be shown that
%\[	\mathrm{Var}[|\mathcal{A}|]	\leq \sum_{\spfont{D} 
%\subset \spfont{E}}  2^{n(2r(\spfont{D}) + %r({\spfont{D}}^c))}p_{\spfont{D}},	\]
%where $p_{\spfont{D}} = E [ B(m(\spfont{E}),m'(\spfont{E})) ]$ with $(m_S=1:S\in \spfont{E})$ and $(m_S' = 1: S\not\in \spfont{D})$ but $(m_S' =2 :S\in \spfont{D})$, more succinctly stated as $m = \mathbf{1}$ and $m' = \mathbf{1}+\M{1}(\spfont{D})$ where $\M{1}(\spfont{D}) = (\llbracket S\in \spfont{D}\rrbracket : S\in \spfont{E})$ is the indicator vector for a subset $\spfont{D}$ of $\spfont{E}$.
%for a tuple $m_S \neq m_S'$ for all $S\in D$ and $m_S= m'_S$ for all $j\in E\backslash D$
%(Note that $p_E = p^2$).  
By the recursive generating procedure,
if $m_S\neq m_{S'}$ for some $S\in \spfont{E}$, then all the vectors
corresponding to $R\in \downset{\{S\}}$ appear as though they were
independently generated, even if the indices $m_R$ and $m_R'$ match.
In other words,
\begin{align}
	p\bigg(U^n(\M{1})=u^n,U^{n}(\M{1}+\M{1}(\spfont{D})) = v^n\bigg)
	&= \nonumber \\ \qquad \prod_{t=1}^n  \prod_{S\in \downset{\spfont{D}}}p(u_{St}|u_{\upset{\{S\}}\backslash\{S\}t})  p(v_{St} & |v_{\upset{\{S\}}\backslash\{S\}t})\notag 
		  \times \nonumber \\   \prod_{S\in \spfont{E}\backslash \downset{\spfont{D}}}p(u_{St} & |u_{\upset{\{S\}}\backslash\{S\}t})
		 \end{align}
		 
for all potential sequences $u^n,v^n$ with  $u_S= v_S$ when $S\not\in \downset{\spfont{D}}$. If 
these potential sequences are in addition also jointly $\epsilon$-typical
(denote the set of such sequences with $\mathcal{T}(\epsilon,\spfont{D})$),
then
\[  p\left(U^n(\M{1})=u^n,U^{n}(\M{1}+\M{1}(\spfont{D})) = v^n\right)
	\leq 2^{-n(d_{\spfont{D}} -\delta_\epsilon/6)}	\]
where
\[	d_{\spfont{D}} = 2\sum_{S\in \downset{\spfont{D}}}H(U_S|U_{\upset{\{S\}}\backslash\{S\}}) +
		 \sum_{S\in \spfont{E}\backslash \downset{\spfont{D}}}H(U_S|U_{\upset{\{S\}}\backslash\{S\}}).	\]
Consider the set of sequences $v^n$ that are jointly $\epsilon-$typical with respect
to the \emph{joint} distribution on $U_{\spfont{E}}$ and have components
in $\spfont{E}\backslash \spfont{D}$ fixed to $v_S=u_S$ for $S\in \spfont{E}\backslash \spfont{D} $, which we denote
by $\mathcal{S}(\epsilon,u_{\spfont{E}\backslash \spfont{D}})$. Then by standard arguments, 
$\log |\mathcal{S}(\epsilon,u_{\spfont{E}\backslash \spfont{D}}) |\leq n H(U_{\spfont{D}}|U_{\spfont{E}\backslash \spfont{D}}).$
Combining the above observations yields a bound on the probability 
that the pair of sequences $U^n(\M{1})=u^n,U^{n}(\M{1}+\M{1}(\spfont{D}))=v^n$ are jointly typical:
\begin{align*}
	p_{\spfont{D}}
	&= \sum_{u_{\spfont{D}}^n,v_{\spfont{D}}^{n}\in \mathcal{T}(\epsilon,\spfont{D})}p\bigg(U^n(\M{1})=u^n,U^{n}(\M{1}+\M{1}(\spfont{D})) = v^n\bigg)\\
	&= \sum_{u_{\spfont{D}}^n,v_{\spfont{D}}^{n}\in \mathcal{T}(\epsilon,\spfont{D})} 2^{-nd_{\spfont{D}}}\\
	&=  \sum_{u_{\spfont{D}}^n: u^n\, \epsilon-\text{typical}} \sum_{v^n\in \mathcal{S}(\epsilon,u_{E\backslash \spfont{D}})}2^{-nd_{\spfont{D}}}\\
	&\leq  2^{nH(U_{\spfont{E}}) + nH(U_{\spfont{D}}|U_{\spfont{E}\backslash {\spfont{D}}})- n d_{\spfont{D}} + \delta_\epsilon/2},
\end{align*}
This bound, combined with  the bound
\[	\frac{1}{n}\log p\geq H(U_{\spfont{E}})- \sum_{S\in \spfont{E}} H(U_S|U_{\upset{\{S\}}\backslash\{S\}}) + \delta_\epsilon/4	\]
yields
\begin{align*}
	\frac{\mathrm{Var}(|\mathcal{A}|)}{E[|\mathcal{A}|]^2 }
		&\leq \sum_{\spfont{D}\subset \spfont{E}} 2^{-n\triangle(D)} %\to 0.
\end{align*}
where
\[	\triangle(D) = r(\spfont{E}\backslash  \spfont{D}) - \left(\sum_{S\in \spfont{E}\backslash \downset{D}} 
					H(U_S|U_{\upset{\{S\}}\backslash\{S\}}) - H(U_{\spfont{E}\backslash \spfont{D}})\right) - \delta_\epsilon.	\]
Thus the error tends to zero if the conditions 
\begin{align*}
		r(\spfont{G}) \geq &\sum_{S\in \mathcal{X}_{\uparrow}({\spfont{G}})} H(U_S|U_{\upset{\{S\}}\backslash\{S\}}) - H(U_{\spfont{G}}),
\end{align*}
where $ \mathcal{X}_{\uparrow}(\spfont{G})$ is the largest up-set contained within $\spfont{G}$,
hold for all subsets $\spfont{G}\subseteq \spfont{E}$. But not all of these inequalities are necessary.
Observe that as the rates are non-negative, and $ \mathcal{X}_{\uparrow}(\spfont{G})$
is a subset of $\spfont{G}$,
\begin{align}
	r(\spfont{G})  \geq r(\mathcal{X}_{\uparrow}(\spfont{G})) 
		& \geq & \sum_{S\in \mathcal{X}_{\uparrow}({\spfont{G}})} H(U_S|U_{\upset{\{S\}}\backslash\{S\}}) - H(U_{\mathcal{X}_{\uparrow}(\spfont{G})}) \nonumber  \\
		& \geq & \sum_{S\in \mathcal{X}_{\uparrow}({\spfont{G}})} H(U_S|U_{\upset{\{S\}}\backslash\{S\}}) - H(U_{\spfont{G}}).		\end{align}
Thus, the inequality corresponding to $\mathcal{X}_{\uparrow}(\spfont{G})$ implies
the inequality corresponding to $\spfont{G}$. As such, it suffices to only enforce
those inequalities corresponding to the up-sets of $\spfont{E}$.

\end{appendices}

\bibliographystyle{plain}
\bibliography{DoF_BCCM}

 \end{document}